\documentclass[envcountsame]{llncs}
    \usepackage{amsmath}
    \usepackage{amsxtra}
    \usepackage{amstext}
    \usepackage{amssymb}
    \usepackage{latexsym}
    \usepackage{dsfont} % for \mathds{N}
\pagestyle{plain}
\newcommand{\tr}[2][1]{Tr_{#1}^{#2}}

\newcommand{\GF}[1]{{\mathbb F}_{#1}}

%\usepackage[dvips]{color}
%\definecolor{Grey}{cmyk}{0,0,0,0.45}
%\definecolor{grey}{cmyk}{0,0,0,0.25}

\begin{document}
\title{Complete solution over $\GF{p^n}$ of the equation $X^{p^k+1}+X+a=0$}
\author{Kwang Ho Kim\inst{1,2}\and Jong Hyok Choe\inst{1} \and Sihem Mesnager\inst{3}}
\institute{ Institute of Mathematics, State Academy of Sciences,
Pyongyang, Democratic People's Republic of Korea\\
\email{khk.cryptech@gmail.com} \and PGItech Corp., Pyongyang, Democratic People's Republic of Korea\\ \and Department of Mathematics, University of Paris VIII, F-93526 Saint-Denis, University
Sorbonne Paris Cit\'e, LAGA, UMR 7539, CNRS, 93430 Villetaneuse and T\'el\'ecom Paris, 91120 Palaiseau, France.\\
\email{smesnager@univ-paris8.fr}\\} \maketitle
\date{today}

\begin{abstract}
The problem of solving  explicitly the equation $P_a(X):=X^{q+1}+X+a=0$ over the finite
field $\GF{Q}$, where $Q=p^n$, $q=p^k$ and $p$ is a prime, arises in
many different contexts including finite geometry, the inverse
Galois problem \cite{ACZ2000}, the construction of difference sets
with Singer parameters \cite{DD2004}, determining cross-correlation
between $m$-sequences \cite{DOBBERTIN2006} and to construct error
correcting codes \cite{Bracken2009},  cryptographic APN functions
\cite{BTT2014,Budaghyan-Carlet_2006}, designs \cite{Tang_2019}, as
well as to speed up the index calculus method for computing discrete
logarithms on finite fields \cite{GGGZ2013,GGGZ2013+} and on
algebraic curves \cite{M2014}.

 Subsequently, in
\cite{Bluher2004,HK2008,HK2010,BTT2014,Bluher2016,KM2019,CMPZ2019,MS2019,KCM19},
the $\GF{Q}$-zeros  of $P_a(X)$ have been studied. In
\cite{Bluher2004}, it was shown that the possible values of the
number of
 the zeros that $P_a(X)$ has in $\GF{Q}$ is $0$, $1$, $2$ or $p^{\gcd(n, k)}+1$.
 Some criteria for the number of the $\GF{Q}$-zeros of $P_a(x)$  were
 found in \cite{HK2008,HK2010,BTT2014,KM2019,MS2019}.
However, while the ultimate goal is to explicit all the
$\GF{Q}$-zeros,
 even in the case $p=2$, it was solved only under the condition $\gcd(n, k)=1$ \cite{KM2019}.\\

In this article, we discuss this equation without any restriction on $p$ and
$\gcd(n,k)$. In \cite{KCM19}, for the cases of one or two
$\GF{Q}$-zeros, explicit expressions for these rational zeros in
terms of $a$ were provided, but for the case of $p^{\gcd(n, k)}+1$
$\GF{Q}-$ zeros  it was remained open to explicitly compute the
zeros. This paper solves the remained problem, thus now the equation
$X^{p^k+1}+X+a=0$ over $\GF{p^n}$ is completely solved for any prime
$p$, any integers $n$ and $k$.\\

\noindent\textbf{Keywords:} Equation $\cdot$
Finite field $\cdot$ Zeros of a polynomial.\\

{\bf Mathematics Subject Classification.} 12E05, 12E12, 12E10.

\end{abstract}

\section{Introduction}
Let $n$ and $k$ be any positive integers with $\gcd(n,k)=d$. Let
$Q=p^n$ and $q=p^k$ where $p$ is a prime. We consider the polynomial
\[P_a(X):=X^{q+1}+X+a, a\in \GF{Q}^{*}.\] Notice the more general polynomial forms
$X^{q+1}+rX^{q}+sX+t$ with $s\neq r^q$ and $t\neq rs$ can be
transformed into this form by the substitution
 $X=(s-r^q)^{\frac{1}{q}}X_1-r$. It is clear that $P_a(X)$ have no multiple roots.

 These polynomials have arisen in several different contexts including finite geometry, the inverse Galois
problem \cite{ACZ2000}, the construction of difference sets with
Singer parameters \cite{DD2004}, determining cross-correlation
between $m$-sequences \cite{DOBBERTIN2006} and to construct error
correcting codes \cite{Bracken2009}, APN functions
\cite{BTT2014,Budaghyan-Carlet_2006}, designs \cite{Tang_2019}.
These polynomials are also exploited to speed up (the relation
generation phase in) the index calculus method for computation of
discrete logarithms on finite fields \cite{GGGZ2013,GGGZ2013+} and
on algebraic curves \cite{M2014}.

Let $N_a$ denote the number of zeros in $\GF{Q}$ of polynomial
$P_a(X)$ and $M_i$ denote the number of $a\in \GF{Q}^{*}$ such that
$P_a(X)$ has exactly $i$ zeros in $\GF{Q}$. In 2004, Bluher
\cite{Bluher2004} proved that $N_a$ takes either of 0, 1, 2 and
$p^d+1$ where $d=\gcd(k, n)$ and computed $M_i$ for every $i$. She
also stated some criteria for the number of the $\GF{Q}$-zeros of
$P_a(X)$.

The ultimate goal in this direction of research is to identify all
the $\GF{Q}$-zeros of $P_a(X)$. Subsequently, there were much
efforts for this goal, specifically for a particular instance of the
problem over binary fields i.e. $p=2$. In 2008 and 2010, Helleseth
and Kholosha \cite{HK2008,HK2010} found new criteria for the number
of $\GF{2^n}$-zeros of $P_a(X)$. In the cases when there is a unique
zero or exactly two zeros and $d$ is odd, they provided explicit
expressions of these zeros as polynomials of $a$ \cite{HK2010}. In
2014, Bracken, Tan, and Tan \cite{BTT2014} presented a criterion for
$N_a=0$ in $\GF{2^n}$ when $d=1$ and $n$ is even. In 2019, Kim and
Mesnager \cite{KM2019} completely solved this equation
$X^{2^k+1}+X+a=0$ over $\GF{2^n}$ when $d=1$. They showed that the
problem of finding zeros in $\GF{2^n}$ of $P_a(X)$, in fact, can be
divided into two problems with odd $k$: to find the unique preimage
of an element in $\GF{2^n}$ under an M\"{u}ller-Cohen-Matthews
polynomial and to find preimages of an element in $\GF{2^n}$ under a
Dickson polynomial. By completely solving these two independent
problems, they explicitly calculated all possible zeros in
$\GF{2^n}$ of $P_a(X)$, with new criteria for which $N_a$ is equal
to $0$, $1$ or $p^d+1$ as a by-product.

Very recently, new criteria for which $P_a(X)$ has $0$, $1$, $2$ or
$p^d+1$ roots were stated by \cite{KCM19,MS2019} for any
characteristic. In \cite{KCM19}, for the cases of one or two
$\GF{Q}$-zeros, explicit expressions for these rational zeros in
terms of $a$ are provides. For the case of $p^{\gcd(n, k)}+1$
rational zeros, \cite{KCM19} provides a parametrization of such
$a$'s and expresses the $p^{\gcd(n, k)}+1$ rational zeros by using
that parametrization, but it was remained open to explicitly
represent the zeros.

Following \cite{KCM19}, this paper discuss the equation
$X^{p^k+1}+X+a=0, a\in \GF{p^n}$, without any restriction on $p$ and
$\gcd(n,k)$. After introducing some prerequisites from \cite{KCM19}
(Sec. \ref{prereq}), we solve the open problem remained in
\cite{KCM19} to explicitly represent the $\GF{Q}-$zeros for the case
of $p^{\gcd(n, k)}+1$ rational zeros (Sec. \ref{complete}). After
all, it is concluded that the equation $X^{p^k+1}+X+a=0$ over
$\GF{p^n}$ is completely solved for any prime $p$, any integers $n$
and $k$.

\section{Prerequisites}\label{prereq}
Throughout this paper, we maintain the following notations.\\
\textbullet\quad $p$ is any prime. \\
\textbullet\quad  $n$ and $k$ are any positive integers. \\
\textbullet\quad $d=\gcd(n,k)$. \\
\textbullet\quad  $m:=n/d$.\\
\textbullet\quad $q=p^k$. \\
\textbullet\quad $Q=p^n$. \\
\textbullet\quad $a$ is any element of the finite field
$\GF{Q}^*$.\\

 Given positive integers $L$ and $l$, define a polynomial
\[T^{Ll}_{L}(X):=X+X^{p^L}+\cdots+X^{p^{L(l-2)}}+X^{p^{L(l-1)}}.\]
Usually we will abbreviate $T^{l}_{1}(\cdot)$ as $T_l(\cdot)$. For
$x\in \GF{p^l}$, $T_l(x)$ is the absolute trace $\tr{l}(x)$ of $x$.

In \cite{KCM19}, the sequence of polynomials $\{A_r(X)\}$ in
$\GF{p}[X]$ is defined as follows:
\begin{equation}\label{eq.defA}
\begin{aligned}
&A_1(X)=1, A_2(X)=-1,\\
&A_{r+2}(X)=-A_{r+1}(X)^{q}-X^{q}A_{r}(X)^{q^2} \text{ for } r\geq
1.
\end{aligned}
\end{equation}
The following lemma gives another identity which can be used as an
alternative definition of $\{A_r(X)\}$ and an interesting property
of this polynomial sequence which will be importantly applied
afterwards.
\begin{lemma}[\cite{KCM19}]\label{Lem.newdef}
 For any $r\geq 1$, the following are true.
\begin{enumerate}
\item \begin{equation}\label{eq.defAA}
A_{r+2}(X)=-A_{r+1}(X)-X^{q^r}A_{r}(X).
\end{equation}
\item \begin{equation}\label{eq.Norm}
A_{r+1}(X)^{q+1}-A_r(X)^qA_{r+2}(X)=X^{\frac{q(q^r-1)}{q-1}}.
\end{equation}
\end{enumerate}
\end{lemma}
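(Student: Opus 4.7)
The plan is to prove both parts by induction on $r$, each time introducing an auxiliary polynomial whose vanishing (or specified factorisation) is exactly the claim.

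For part~1, I would set
\[
f(r) := A_{r+2}(X) + A_{r+1}(X) + X^{q^r} A_r(X),
\]
so that the target becomes $f(r) = 0$ for all $r \geq 1$. The base cases $f(1)$ and $f(2)$ follow by direct computation starting from $A_1 = 1$, $A_2 = -1$ and applying (\ref{eq.defA}) once and twice (producing $A_3 = 1 - X^q$ and $A_4 = -1 + X^q + X^{q^2}$). For the inductive step $r \geq 3$, I would apply (\ref{eq.defA}) once to each of $A_{r+2}$, $A_{r+1}$, $A_r$, which expresses $f(r)$ as a sum of six monomial-weighted terms in $A_{r+1}^q, A_r^q, A_{r-1}^q, A_r^{q^2}, A_{r-1}^{q^2}, A_{r-2}^{q^2}$; in parallel, I would expand $f(r-1)^q$ and $f(r-2)^{q^2}$ termwise using the Frobenius (legitimate because the coefficients of $f$ lie in $\GF{p}$). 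A term-by-term comparison should then collapse to the Frobenius-twisted recurrence
\[
f(r) = -f(r-1)^q - X^q f(r-2)^{q^2},
\]
from which $f(r) = 0$ for all $r \geq 1$ follows by induction.

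For part~2, armed with (\ref{eq.defAA}), I would set
\[
g(r) := A_{r+1}(X)^{q+1} - A_r(X)^q\, A_{r+2}(X)
\]
and aim for the one-step recurrence $g(r+1) = X^{q^{r+1}} g(r)$; combined with the direct base calculation $g(1) = (-1)^{q+1} - (1 - X^q) = X^q$, this telescopes to $g(r) = X^{q + q^2 + \cdots + q^r} = X^{q(q^r-1)/(q-1)}$, which is exactly (\ref{eq.Norm}). To obtain the one-step recurrence I would substitute $A_{r+3} = -A_{r+2} - X^{q^{r+1}} A_{r+1}$ (from (\ref{eq.defAA})) into $g(r+1)$ and rearrange as $A_{r+2}(A_{r+2}^q + A_{r+1}^q) + X^{q^{r+1}} A_{r+1}^{q+1}$; the middle factor $A_{r+2}^q + A_{r+1}^q$ equals $-X^{q^{r+1}} A_r^q$ by raising (\ref{eq.defAA}) to the $q$-th power, after which $X^{q^{r+1}}$ factors out cleanly to recover $X^{q^{r+1}} g(r)$.

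The main obstacle is engineering the two-step Frobenius-twisted recurrence for $f$ in part~1: one has to invoke (\ref{eq.defA}) three times within a single inductive step and then recognise that the resulting six monomial contributions regroup \emph{precisely} as $-f(r-1)^q - X^q f(r-2)^{q^2}$, with no leftover terms. Once this identity is in hand, part~2 is a short calculation because (\ref{eq.defAA}) linearises the index shift from $r$ to $r+1$ that the original quadratic-Frobenius recurrence (\ref{eq.defA}) obscures.
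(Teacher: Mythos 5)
Your proof is correct. Note that the paper itself states Lemma~\ref{Lem.newdef} as imported from \cite{KCM19} without reproducing a proof, so there is nothing in this text to compare against; judged on its own, your argument is complete. The key identity $f(r)=-f(r-1)^q-X^qf(r-2)^{q^2}$ does hold exactly as you predict: expanding $-f(r-1)^q-X^qf(r-2)^{q^2}$ gives the six terms $-A_{r+1}^q-A_r^q-X^{q^r}A_{r-1}^q-X^qA_r^{q^2}-X^qA_{r-1}^{q^2}-X^{q^r+q}A_{r-2}^{q^2}$, which regroup via (\ref{eq.defA}) into $A_{r+2}+A_{r+1}+X^{q^r}A_r=f(r)$ with no remainder (valid for $r\geq 3$, so the two base cases $f(1)=f(2)=0$ suffice). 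The telescoping $g(r+1)=X^{q^{r+1}}g(r)$ in part~2 likewise checks out, including the base value $g(1)=(-1)^{q+1}-(1-X^q)=X^q$, which equals $1-1+X^q$ in every characteristic.
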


The zero set of $A_r(X)$ can be completely determined for all $r$:
\begin{proposition}[\cite{KCM19}]\label{Prop.A_r(x)=0}
For any $r\geq 3$,
\[
\{x\in \overline{\GF{p}} \mid
A_r(x)=0\}=\left\{\frac{(u-u^q)^{q^2+1}}{(u-u^{q^2})^{q+1}}, \ \
u\in \GF{q^r}\setminus\GF{q^2}\right\}.
\]
%and $|Z_i| = \frac{q^{i-1}-1}{q^2-1}$ if $i$ is odd and $|Z_i| =
%\frac{q^{i-1}-q}{q^2-1}$ if $i$ is even.
\end{proposition}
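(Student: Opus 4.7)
The plan is to prove the set equality by deriving an explicit closed-form expression for $A_r(x(u))$ as a rational function of $u$ and reading off its zero locus. Setting $v := u - u^q$ so that $u - u^{q^2} = v + v^q$ and $x(u) = v^{q^2+1}/(v+v^q)^{q+1}$, my central claim is that for every $r \geq 3$ and every $u \in \overline{\GF{p}}$ with $u \notin \GF{q^2}$,
\begin{equation}\label{Ar-formula}
A_r(x(u)) \;=\; \frac{(-1)^{r-1}\,(u-u^q)^{q^2+q^3+\cdots+q^{r-1}}\,(u-u^{q^r})^q}{(u-u^{q^2})^{q+q^2+\cdots+q^{r-1}}}.
\end{equation}
The hypothesis $u \notin \GF{q^2}$ makes the denominator nonzero, and since $\GF{q} \subset \GF{q^2}$ we also have $u - u^q \neq 0$; hence \eqref{Ar-formula} yields $A_r(x(u)) = 0 \iff u - u^{q^r} = 0 \iff u \in \GF{q^r}$, which already establishes the inclusion ``$\supseteq$'' of the proposition.

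To prove \eqref{Ar-formula}, I first verify the base cases $r = 3$ and $r = 4$ by direct expansion. For $r = 3$, writing $A_3(x(u)) = 1 - x(u)^q$ over a common denominator gives a numerator $(u^q - u^{q^3})(u^{q^2} - u^{q^4}) - (u^q - u^{q^2})(u^{q^3} - u^{q^4})$, which collapses via the cross-ratio identity $(a-c)(b-d) - (a-b)(c-d) = (a-d)(b-c)$ to $(u^q - u^{q^4})(u^{q^2} - u^{q^3}) = (u-u^{q^3})^q(u-u^q)^{q^2}$. For the inductive step, I apply the alternative recurrence $A_{r+2}(X) = -A_{r+1}(X) - X^{q^r} A_r(X)$ from Lemma~\ref{Lem.newdef}(1), substitute \eqref{Ar-formula} at indices $r$ and $r+1$, combine into a single fraction with denominator $(u-u^{q^2})^{q+\cdots+q^{r+1}}$, and factor out the common $(u-u^q)^{q^2+\cdots+q^r}$. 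The identity that remains to check is
\[
(u-u^q)^{q^{r+1}}(u-u^{q^{r+2}})^q \;=\; (u-u^{q^{r+1}})^q(u-u^{q^2})^{q^{r+1}} \;-\; (u-u^q)^{q^{r+2}}(u-u^{q^r})^q,
\]
and after the substitution $y_j := u^{q^j}$ with $(a,b,c,d) := (y_1, y_{r+1}, y_{r+2}, y_{r+3})$ this becomes once more the cross-ratio identity $(b-c)(a-d) = (a-c)(b-d) - (c-d)(a-b)$, a one-line algebraic expansion.

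For the reverse inclusion ``every zero of $A_r$ arises as some $x(u)$'', I will use a surjectivity-by-counting argument. A direct calculation shows that $Y(u) := (u-u^q)/(u-u^{q^2})$ satisfies $Y(u)^{q+1} - Y(u) + x(u) = 0$, so the fibers of $u \mapsto x(u)$ decompose through the roots of the Artin--Schreier-type equation $Y^{q+1} - Y + x = 0$, each fiber of $Y$ being the $\GF{p}$-linear solution set of $Y_0 u^{q^2} - u^q - (Y_0 - 1)u = 0$. Combining this with the obvious symmetries $x(u+c) = x(u)$ for $c \in \GF{q}$ and $x(u^q) = x(u)^q$, one determines the fiber sizes of $u \mapsto x(u)$ restricted to $\GF{q^r} \setminus \GF{q^2}$ and verifies that the number of distinct values in the image matches the number of distinct zeros of $A_r$ (bounded above by $\deg A_r = q^{r-2}$), forcing equality of the two sets.

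The main obstacle is guessing and verifying \eqref{Ar-formula}: once the correct ansatz with its specific geometric-series exponents on the Frobenius conjugates $u - u^{q^r}$, $u - u^q$, and $u - u^{q^2}$ is identified from the small cases, the induction reduces cleanly to the cross-ratio identity above. The reverse-inclusion count is essentially bookkeeping, but requires care in tracking the interplay of Frobenius orbits and $\GF{q}$-translate cosets within $\GF{q^r}$ to avoid double-counting.
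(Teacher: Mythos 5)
The paper itself gives no proof of this proposition (it is quoted from \cite{KCM19}), so your attempt can only be judged on its own merits. The first half of your plan is sound: I checked your closed form for $A_r(x(u))$ at $r=3$ and verified that the inductive step via $A_{r+2}=-A_{r+1}-X^{q^r}A_r$ reduces, after clearing denominators and dividing out $(u-u^q)^{q^2+\cdots+q^r}$, to exactly the cross-ratio identity $(a-c)(b-d)-(a-b)(c-d)=(a-d)(b-c)$ you state. This cleanly gives the inclusion ``$\supseteq$''.

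The reverse inclusion, however, has a genuine gap, and it is not just bookkeeping. First, $\deg A_r\neq q^{r-2}$ for $r\geq 5$: from $A_{r+2}=-A_{r+1}-X^{q^r}A_r$ one gets $\deg A_5=q^3+q$, $\deg A_6=q^4+q^2$, and in general $\deg A_r=q^{r-2}+q^{r-4}+\cdots$. Second, and more damaging, $A_r$ is \emph{inseparable}: $A_3=(1-X)^q$, $A_4=(X^q+X-1)^q$, and inductively $A_r=C_r^q$ for a polynomial $C_r$ satisfying the same recurrence, so the number of distinct zeros is $\deg A_r/q$, not $\deg A_r$. Your plan to show that the image size ``matches the number of distinct zeros of $A_r$ (bounded above by $\deg A_r$)'' therefore cannot close: already for $q=2$, $r=3$ the image is the single point $\{1\}$ while $\deg A_3=2$. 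To make the count work you would need (i) the exact count of distinct zeros, i.e.\ the separability of $C_r$ (which does follow from identity \eqref{eq.Norm}, since a common root of $A_r$ and $A_{r+2}$ would force a root of $X^{q(q^r-1)/(q-1)}$, impossible as $A_r(0)=\pm1$), and (ii) a proof that for every attained value $x_0$ \emph{all} $q+1$ roots of $T^{q+1}-T+x_0$ are attained as values $Y(u)$ with $u\in\GF{q^r}\setminus\GF{q^2}$ — otherwise the $x$-fibers need not all have size $(q+1)(q^2-q)$ and the image could be strictly smaller than the zero set. Point (ii) is a substantive claim about which roots of the projective polynomial lie in the image of $Y$, and your sketch does not address it; as written, the ``surjectivity-by-counting'' step does not go through.
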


Further, define polynomials
\[\begin{aligned}
&F(X):=A_m(X),\\
&G(X):=-A_{m+1}(X)-XA_{m-1}^q(X).
\end{aligned}\]

It can be shown that if $F(a)\neq0$ then the $\GF{Q}$-zeros of
$P_a(X)$ satisfy a quadratic equation and therefore necessarily
$N_a\leq 2$.
\begin{lemma}[\cite{KCM19}]\label{lem.quadEq}
Let $a\in \GF{Q}^{*}$. If $P_a(x)=0$ for $x\in \GF{Q}$, then
\begin{equation}\label{eq.quadEq}
F(a)x^2+G(a)x+aF^q(a)=0.
\end{equation}
\end{lemma}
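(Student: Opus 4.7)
The plan is to prove, by induction on $r \geq 1$, the identity
\begin{equation*}
D_r(x) \cdot x^{q^r} \;=\; D_{r+1}(x),
\end{equation*}
where $D_r(x) := A_r(a)\, x - a A_{r-1}(a)^q$ (with the convention $A_0 := 0$), valid for every $x$ satisfying $P_a(x) = 0$. Once this is established, I specialize at $r = m$: since $n = dm$ and $d \mid k$, we have $n \mid km$, so every $x \in \GF{Q}$ satisfies $x^{q^m} = x$, turning the identity into $D_m(x) \cdot x = D_{m+1}(x)$. Expanding this is exactly
\[
A_m(a)\, x^2 \;-\; \bigl(A_{m+1}(a) + a A_{m-1}(a)^q\bigr) x \;+\; a A_m(a)^q \;=\; 0,
\]
which is $F(a) x^2 + G(a) x + a F^q(a) = 0$ by the definitions of $F$ and $G$. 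The base case $r = 1$ is the equation $P_a(x) = 0$ itself: $D_1(x)\cdot x^q = x\cdot x^q = x^{q+1} = -x - a = D_2(x)$.

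For the inductive step, I raise the hypothesis $D_r \cdot x^{q^r} = D_{r+1}$ to the $q$-th power to obtain $D_r^q \cdot x^{q^{r+1}} = D_{r+1}^q$, and multiply by $D_{r+1}$ to get $D_r^q D_{r+1}\, x^{q^{r+1}} = D_{r+1}^{q+1}$. The inductive conclusion will therefore follow from the key congruence
\begin{equation*}
D_{r+1}^{q+1} \;\equiv\; D_r^q \, D_{r+2} \pmod{P_a(x)},
\end{equation*}
after cancelling $D_r^q$ in the case $D_r(x) \neq 0$. In the degenerate case $D_r(x) = 0$, the inductive hypothesis forces $D_{r+1}(x) = 0$, and the linear recurrence $D_{r+2}(x) = -D_{r+1}(x) - a^{q^r} D_r(x)$ (obtained by specializing (\ref{eq.defAA}) at $X = a$ for the $x$-coefficient, and by combining it with its Frobenius conjugate for the constant term) then forces $D_{r+2}(x) = 0$, so both sides of the inductive conclusion vanish.

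The main computational obstacle is the verification of $D_{r+1}^{q+1} \equiv D_r^q D_{r+2} \pmod{P_a}$. Expanding both sides and reducing every $x^{q+1}$ to $-x - a$, the difference collapses to a $\GF{Q}$-combination of $1$, $x$, and $x^q$. The coefficient of $x^q$ cancels immediately, the two sides contributing $\pm a A_r(a)^q A_{r+1}(a)^q$. The cancellations of the $x$-coefficient and of the constant term each reduce, after pulling out suitable powers of $a$ and invoking the recurrence (\ref{eq.defAA}), to the identity
\[
A_{s+1}(a)^{q+1} - A_s(a)^q A_{s+2}(a) \;=\; a^{q(q^s-1)/(q-1)}
\]
furnished by Lemma \ref{Lem.newdef}(2), applied at $s = r$ (directly) and at $s = r-1$ (after a Frobenius); the one subtlety is the exponent bookkeeping, using $\tfrac{q(q^r-1)}{q-1} + 1 = \tfrac{q^{r+1}-1}{q-1}$ to match powers of $a$ coming from the two applications. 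This algebraic reconciliation, though mechanical, is the only calculational step requiring genuine care.
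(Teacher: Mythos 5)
Your proof is correct. The paper itself gives no argument for this lemma (it is imported from \cite{KCM19}), but your derivation is self-contained modulo Lemma~\ref{Lem.newdef}: the telescoping identity $D_r(x)\,x^{q^r}=D_{r+1}(x)$ with $D_r(x)=A_r(a)x-aA_{r-1}(a)^q$ is exactly the mechanism underlying the cited source, the key congruence $D_{r+1}^{q+1}\equiv D_r^q D_{r+2}\pmod{P_a(x)}$ does reduce to \eqref{eq.Norm} as you claim (indeed one finds $D_{r+1}^{q+1}-D_r^qD_{r+2}=a^{q(q^r-1)/(q-1)}P_a(x)$ identically), and you correctly handle both the degenerate case $D_r(x)=0$ and the specialization $x^{q^m}=x$ for $x\in\GF{Q}$ via $n\mid km$.
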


By exploiting these definitions and facts, the following results
have been got.

\subsection{$N_a\leq 2$: Odd $p$}

\begin{theorem}[\cite{KCM19}]\label{Theo.oddp_N<3}
Let $p$ be odd. Let $a\in \GF{Q}$ and $E=G(a)^2-4a{F(a)}^{q+1}$.
\begin{enumerate}
\item $N_a=0$ if and only if $E$ is not a quadratic residue in $\GF{p^d}$ (i.e.
$E^{\frac{p^d-1}{2}}\neq 0,1).$
\item $N_a=1$ if and only if $F(a)\neq 0$ and $E=0.$ In this case, the
unique zero in $\GF{Q}$ of $P_a(X)$ is $-\frac{G(a)}{2F(a)}.$
\item $N_a=2$ if and only if  $E$ is a non-zero quadratic residue in $\GF{p^d}$ (i.e.
$E^{\frac{p^d-1}{2}}=1$). In this case, the two zeros in $\GF{Q}$ of
$P_a(X)$ are $x_{1,2}=\frac{\pm E^{\frac{1}{2}}-G(a)}{2F(a)}$, where
$E^{\frac{1}{2}}$ represents a quadratic root in $\GF{p^d}$ of $E$.
\end{enumerate}
\end{theorem}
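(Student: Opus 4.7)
The plan is to start from Lemma \ref{lem.quadEq}: any zero $x\in\GF{Q}$ of $P_a(X)$ must satisfy the quadratic $F(a)x^2 + G(a)x + aF(a)^q = 0$. When $F(a)\neq 0$, this is a non-degenerate quadratic over $\GF{Q}$ with discriminant $E = G(a)^2 - 4aF(a)^{q+1}$, and in odd characteristic the standard formula produces candidate roots $x = (-G(a)\pm\sqrt{E})/(2F(a))$. The theorem therefore reduces to pinning down when these candidates both lie in $\GF{Q}$ and are genuine $P_a$-zeros, which turns out to be controlled by the arithmetic of $E$ in the subfield $\GF{p^d}$ rather than in $\GF{Q}$ itself.

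The first main step is to show $E\in\GF{p^d}$. Applying Lemma \ref{Lem.newdef}(2) with $r=m-1$ gives $F(a)^{q+1} = A_{m-1}(a)^q A_{m+1}(a) + a^{q(q^{m-1}-1)/(q-1)}$. Substituting this into $E$ and expanding $G(a)^2 = \bigl(A_{m+1}(a) + aA_{m-1}(a)^q\bigr)^2$, the cross-terms telescope and one obtains
\[
E = \bigl(A_{m+1}(a) - aA_{m-1}(a)^q\bigr)^2 - 4\,a^{(q^m-1)/(q-1)}.
\]
Since $\sigma^k$ generates $\mathrm{Gal}(\GF{Q}/\GF{p^d})$ (because $\gcd(k/d,m)=1$), the exponent $(q^m-1)/(q-1) = 1+q+\cdots+q^{m-1}$ traces a full orbit and $a^{(q^m-1)/(q-1)} = N_{\GF{Q}/\GF{p^d}}(a)\in\GF{p^d}$. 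Proving the remaining piece $\bigl(A_{m+1}(a)-aA_{m-1}(a)^q\bigr)^2 \in \GF{p^d}$, either by induction using the recurrence \eqref{eq.defAA} or by a direct check that $E^{p^d}=E$, completes this step.

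With $E\in\GF{p^d}$, the trichotomy follows from classical discriminant analysis. If $E=0$, the quadratic yields the unique candidate $x_0 = -G(a)/(2F(a))$, and a substitution into $P_a$ simplified via Lemma \ref{Lem.newdef} confirms $P_a(x_0)=0$, giving $N_a=1$. If $E$ is a nonzero square in $\GF{p^d}$ and $t\in\GF{p^d}$ satisfies $t^2=E$, then $(-G(a)\pm t)/(2F(a))\in\GF{Q}$ and a parallel substitution shows both candidates are zeros of $P_a$, yielding $N_a=2$.

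The main obstacle is the non-residue case: when $E$ is a non-square in $\GF{p^d}$, its square roots lie in $\GF{p^{2d}}\setminus\GF{p^d}$, which is contained in $\GF{Q}$ whenever $m$ is even, so the quadratic can still admit $\GF{Q}$-solutions and one might mistakenly conclude $N_a=2$. The crux is to show that these candidates are spurious, being artifacts of the squaring implicit in the derivation of Lemma \ref{lem.quadEq}, and in fact fail to satisfy $P_a$. I would verify this by direct substitution: writing $x = (-G(a)+t)/(2F(a))$ with $t^2=E\in\GF{p^d}$, computing $x^{q+1}+x+a$, and reducing through the $A_r$-identities, one should reach $P_a(x) = c\,(t-t^{p^d})$ for some explicit nonzero $c\in\GF{Q}$, so that $P_a(x)=0$ forces $t\in\GF{p^d}$. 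The edge case $F(a)=0$ is then subsumed: either $F(a)=G(a)=0$, giving $E=0$ and the complementary $N_a=p^d+1$ regime not addressed by the theorem, or $F(a)=0, G(a)\neq 0$, in which case one checks that $G(a)\in\GF{p^{2d}}\setminus\GF{p^d}$ (forced by $E=G(a)^2\in\GF{p^d}$) so that $E^{(p^d-1)/2}=G(a)^{p^d-1}=-1$, a non-residue, consistently giving $N_a=0$ via criterion (1).
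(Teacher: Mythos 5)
The paper does not prove this theorem; it is quoted verbatim from \cite{KCM19}, so there is no in-text proof to compare against. Judged on its own terms, your outline has the right skeleton: Lemma~\ref{lem.quadEq} plus the identity $E=\bigl(A_{m+1}(a)-aA_{m-1}(a)^q\bigr)^2-4a^{(q^m-1)/(q-1)}$ (which is correct, follows from Lemma~\ref{Lem.newdef}(2) with $r=m-1$ exactly as you say, and is the right way to see $E\in\GF{p^d}$). But the outline leaves unproved precisely the steps that carry the difficulty. Lemma~\ref{lem.quadEq} is only a necessary condition: it bounds $N_a$ by $2$ when $F(a)\neq 0$ and locates any zeros among $\frac{-G(a)\pm\sqrt{E}}{2F(a)}$, but every ``if'' direction of the theorem needs the converse --- that these candidates genuinely satisfy $P_a(x)=0$ in cases (2) and (3), and fail to in case (1) when $m$ is even and $\sqrt{E}\in\GF{p^{2d}}\subseteq\GF{Q}$. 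You correctly flag this as the crux but only conjecture that a substitution ``should reach $P_a(x)=c\,(t-t^{p^d})$''; nothing in the stated prerequisites delivers this, and since $x^q$ is a Galois conjugate rather than a low-degree polynomial in $x$, the reduction is not a routine consequence of the $A_r$-identities. Some additional global input (e.g.\ showing $2F(a)x+G(a)\in\GF{p^d}$ for genuine zeros, or invoking Bluher's classification $N_a\in\{0,1,2,p^d+1\}$ together with her counts $M_i$) is what actually closes this gap in the literature, and your sketch uses neither. Likewise the remaining membership $\bigl(A_{m+1}(a)-aA_{m-1}(a)^q\bigr)^2\in\GF{p^d}$ is asserted, not proved.

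Your treatment of the edge case $F(a)=0$ is moreover wrong. By Lemma~\ref{Lem.p^d+1cond}, $F(a)=0$ forces $N_a=p^d+1$, so your sub-case ``$F(a)=0$, $G(a)\neq0$, giving $N_a=0$ via criterion (1)'' would contradict the theorem if it ever occurred. It is in fact vacuous: when $A_m(a)=0$ one has $-aA_{m-1}(a)^q=B_m(a)=A_{m+1}(a)^{1/q}=A_{m+1}(a)\in\GF{p^d}$ by \eqref{Bm} and Lemma~\ref{Am+1}, hence $G(a)=-A_{m+1}(a)-aA_{m-1}(a)^q=0$ and $E=0$, so only your first sub-case arises. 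Separately, your claim that $E=G(a)^2\in\GF{p^d}$ ``forces'' $G(a)\in\GF{p^{2d}}\setminus\GF{p^d}$ fails when $m$ is odd, since then $\GF{p^{2d}}\cap\GF{Q}=\GF{p^d}$. The edge case needs the vacuity argument, not a consistency check.
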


\subsection{$N_a\leq 2$: $p=2$}

When $p=2$, in \cite{KCM19} it is proved that $ G(x)\in \GF{q}
\text{ for any } x\in \GF{q^m}$ and using it
\begin{theorem}[\cite{KCM19}]\label{Theo.p=2.N<3}
Let $p=2$ and $a\in \GF{Q}$. Let
$H=\tr{d}\left(\frac{{Nr}^n_d(a)}{G^2(a)}\right)$ and
$E=\frac{aF(a)^{q+1}}{G^2(a)}$.
\begin{enumerate}
\item $N_a=0$  if and only if  $G(a)\neq 0$ and $H\neq 0$.
\item $N_a=1$ if and only if $F(a)\neq 0$ and $G(a)=0$. In this case, $(aF(a)^{q-1})^{\frac{1}{2}}$ is the unique
zero in $\GF{Q}$ of $P_a(X)$.
\item $N_a=2$  if and only if  $G(a)\neq 0$ and $H=0$. In this
case the two zeros in $\GF{Q}$ are $x_1=\frac{G(a)}{F(a)}\cdot
T_n\left(\frac{E}{\zeta+1}\right)$ and $x_2=x_1+\frac{G(a)}{F(a)}$,
where $\zeta\in \mu_{Q+1}\setminus \{1\}$.
\end{enumerate}
\end{theorem}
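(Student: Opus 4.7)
The plan is to reduce to the quadratic $Q_a(X) := F(a)X^2 + G(a)X + aF(a)^q$ given by Lemma~\ref{lem.quadEq}, and to analyze it according to whether $F(a)$ and $G(a)$ vanish.

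If $F(a) \neq 0$ and $G(a) = 0$, then $Q_a$ collapses to $F(a)X^2 + aF(a)^q = 0$, whose unique root in characteristic $2$ is $x_0 = (aF(a)^{q-1})^{1/2}$. A direct substitution into $P_a$, using $G(a) = 0$ (i.e.\ $A_{m+1}(a) = a A_{m-1}(a)^q$) together with the recurrences~\eqref{eq.defA} and~\eqref{eq.defAA}, should collapse $P_a(x_0)$ to zero, giving $N_a = 1$. If instead $F(a) = 0$ and $G(a) \neq 0$, then Lemma~\ref{lem.quadEq} forces $x = 0$, contradicting $P_a(0) = a \neq 0$, so $N_a = 0$; this subcase is absorbed into the ``$G(a) \neq 0$, $H \neq 0$'' item provided $H \neq 0$ is verified under these hypotheses using the norm identity derived below.

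In the main case $F(a), G(a) \neq 0$, the substitution $Y = F(a) X / G(a)$ turns $Q_a = 0$ into $Y^2 + Y + E = 0$, with $E = aF(a)^{q+1}/G^2(a)$. Solvability of this quadratic over $\GF{Q}$ is equivalent to $\tr{n}(E) = 0$, and in that case one exhibits a solution explicitly as $y_1 = T_n(E/(\zeta + 1))$: using $\zeta^{Q+1} = 1$, hence $\zeta^Q = \zeta^{-1}$, one has $w := E/(\zeta + 1) \in \GF{Q^2}$ satisfying $w + w^Q = E$, so
\[
y_1 + y_1^2 = T_n(w) + T_n(w^2) = w + w^Q = E.
\]
However, $\tr{n}(E) = 0$ is \emph{strictly weaker} than $N_a > 0$: for example, at $a = 1 \in \GF{4}$ with $k = 1$, the quadratic $X^2 + X + 1$ has both its roots in $\GF{4}$, yet $P_1 = X^3 + X + 1$ is irreducible of degree $3$, so $N_1 = 0$. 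Hence a finer criterion is needed, and it is precisely $H = 0$.

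The hard part will be upgrading the quadratic criterion $\tr{n}(E) = 0$ to the full $P_a$-criterion $H = 0$. My approach is to iterate the Frobenius relation $x^q = 1 + a/x$ (valid for any $P_a$-zero $x \neq 0$) along the Galois orbit $u_r := x^{q^r}$: since $u_r(u_{r+1}+1) = a^{q^r}$, taking the product over $r = 0, \dots, m-1$ and invoking $u_m = x$ yields the multiplicative identity ${Nr}^n_d(x)\,{Nr}^n_d(x+1) = {Nr}^n_d(a)$. Combining this with the Vieta relation $x(x + G(a)/F(a)) = aF(a)^{q-1}$ from $Q_a(x) = 0$, then dividing by $G^2(a) \in \GF{p^d}$ and taking $\tr{d}$ should yield $H = \tr{d}({Nr}^n_d(a)/G^2(a)) = 0$ as the exact condition for a $\GF{Q}$-zero of $P_a$ to exist. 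This conversion of a multiplicative norm identity into the additive trace $H$, crucially exploiting $G(a) \in \GF{p^d}$ and the transitivity $\tr{n} = \tr{d} \circ {Tr}^n_d$, is the technical heart of the proof. Granted the $H$-criterion, the two zeros in the $N_a = 2$ case are $x_1 = (G(a)/F(a))\,y_1$ and $x_2 = x_1 + G(a)/F(a)$ (the second by Vieta on $Q_a$), and a final plug-in verification confirms that they actually satisfy $P_a$ rather than merely $Q_a$.
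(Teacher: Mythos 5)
First, a remark on scope: the paper does not prove this theorem; it is imported verbatim from \cite{KCM19} as a prerequisite, so there is no in-paper proof to compare against and your proposal must stand on its own. The routine parts of your plan are fine: the reduction to the quadratic of Lemma~\ref{lem.quadEq}, the normalization $Y=F(a)X/G(a)$ giving $Y^2+Y+E=0$, the Artin--Schreier solution $y_1=T_n\bigl(E/(\zeta+1)\bigr)$ via $w+w^Q=E$, and --- to your credit --- the $\GF{4}$ example showing that $\tr{n}(E)=0$ is strictly weaker than $N_a>0$, which correctly locates where the difficulty lives.

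The problem is that the two steps carrying the actual content of the theorem are asserted, not proved. (i) The passage from the multiplicative identity ${Nr}^n_d(x)\,{Nr}^n_d(x+1)={Nr}^n_d(a)$ to the additive condition $H=\tr{d}\bigl({Nr}^n_d(a)/G(a)^2\bigr)=0$ is announced with ``should yield'' and no mechanism: a norm identity holding for every root of $P_a$ does not by itself produce a trace condition, and nothing in the sketch explains how ``dividing by $G^2(a)$ and taking $\tr{d}$'' converts a product into the statement that ${Nr}^n_d(a)/G(a)^2$ lies in the image of $u\mapsto u^2+u$ on $\GF{2^d}$. One expects to need an additive link between ${Nr}^n_d(a)$ and the $A_r$'s, e.g.\ identity~\eqref{eq.Norm} with $r=m$, which your plan never invokes. (ii) The sufficiency direction --- that when $H=0$ the two roots of the quadratic genuinely satisfy $P_a$ and not merely $Q_a$ --- is waved off as ``a final plug-in verification,'' yet your own counterexample demonstrates that a root of $Q_a$ need not be a root of $P_a$; for general $m$ this verification is the hard part and no argument is offered (the same applies to checking $P_a\bigl((aF(a)^{q-1})^{1/2}\bigr)=0$ in case~2). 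Finally, your claim that $F(a)=0$, $G(a)\neq0$ yields $N_a=0$ contradicts Lemma~\ref{Lem.p^d+1cond}, which gives $N_a=p^d+1$ whenever $F(a)=0$; that case is in fact vacuous, because $A_m(a)=0$ forces $B_m(a)=aA_{m-1}(a)^q\in\GF{2^d}$ by~\eqref{Bm} and hence $G(a)=B_m(a)^q+B_m(a)=0$ --- a fact you need anyway for the theorem's trichotomy to be exhaustive, and which your sketch misses.
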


\subsection{$N_a=p^d+1$: Auxiliary results}

 \begin{lemma}[\cite{KCM19}]\label{Lem.p^d+1cond} Let $a\in \GF{Q}^*$. The following are equivalent.
 \begin{enumerate}
\item\label{1} $N_a=p^d+1$ i.e. $P_a(X)$ has exactly $p^d+1$ zeros in $\GF{Q}$.
\item\label{2} $F(a)=0$, or equivalently by Proposition \ref{Prop.A_r(x)=0}, there exists
$u\in \GF{q^m}\setminus\GF{q^2}$ such that
$a=\frac{(u-u^q)^{q^2+1}}{(u-u^{q^2})^{q+1}}$.
\item\label{3} There exists $u\in \GF{Q}\setminus\GF{p^{2d}}$ such that
$a=\frac{(u-u^q)^{q^2+1}}{(u-u^{q^2})^{q+1}}$. Then the $p^d+1$
zeros in $\GF{Q}$ of $P_a(X)$ are $x_0=\frac{-1}{1+(u-u^q)^{q-1}}$
and $x_\alpha=\frac{-(u+\alpha)^{q^2-q}}{1+(u-u^q)^{q-1}}$ for
$\alpha \in \GF{p^d}$.
\end{enumerate}
 \end{lemma}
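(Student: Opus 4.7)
The plan is to establish the equivalences via $(3)\Rightarrow(1)\Rightarrow(2)\Rightarrow(3)$; the ``equivalently'' clause inside $(2)$ is immediate from Proposition~\ref{Prop.A_r(x)=0} at $r=m$, so only these three implications require work. For $(3)\Rightarrow(1)$, I would write $v:=u-u^q$ and $w:=u-u^{q^2}$; the telescoping identity $w=v+v^q$ is the main algebraic fact. Given $u\in\GF{Q}\setminus\GF{p^{2d}}$, both $v$ and $w$ are nonzero, because $\GF{Q}\cap\GF{q}=\GF{p^d}\subseteq\GF{p^{2d}}$ and $\GF{Q}\cap\GF{q^2}$ is likewise contained in $\GF{p^{2d}}$. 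Then
\[
x_0^{q+1}+x_0+a \;=\; \frac{v^{q+1}+v^{q^2+1}-v\,w^q}{w^{q+1}} \;=\; \frac{v(v^q+v^{q^2}-w^q)}{w^{q+1}} \;=\; 0,
\]
using $w^q=v^q+v^{q^2}$. For $\alpha\in\GF{p^d}\subseteq\GF{q}$, the substitution $u\mapsto u+\alpha$ preserves both $v$ and $w$ (because $\alpha^q=\alpha$), hence preserves $a$, and an analogous direct computation gives $P_a(x_\alpha)=0$. Distinctness of the $p^d+1$ candidates reduces to the fact that $(u+\alpha)^{q^2-q}=(u+\beta)^{q^2-q}$ forces $(u+\alpha)/(u+\beta)\in\GF{q}$, placing $u\in\GF{q}\subseteq\GF{p^{2d}}$ unless $\alpha=\beta$. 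Bluher's upper bound $N_a\le p^d+1$ then yields equality.

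The implication $(1)\Rightarrow(2)$ is immediate from Lemma~\ref{lem.quadEq}: if $F(a)\ne 0$, every $\GF{Q}$-zero of $P_a(X)$ satisfies an honest quadratic over $\GF{Q}$, giving $N_a\le 2<p^d+1$, a contradiction.

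The main obstacle is $(2)\Rightarrow(3)$. Proposition~\ref{Prop.A_r(x)=0} only provides $u_0\in\GF{q^m}\setminus\GF{q^2}$, whereas $\GF{q^m}$ strictly contains $\GF{Q}$ in general (the index is $k/d$), so we must descend to $\GF{Q}$. My approach is a counting argument: combining $(3)\Rightarrow(1)$ with Bluher's explicit value of $M_{p^d+1}$ and the just-established $(1)\Rightarrow(2)$ pins down $|\{a\in\GF{Q}^*:F(a)=0\}|$; independently, I would analyze the map $\phi:u\mapsto(u-u^q)^{q^2+1}/(u-u^{q^2})^{q+1}$ restricted to $\GF{Q}\setminus\GF{p^{2d}}$. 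The $\GF{Q}$-fiber of $\phi$ over any $a$ in its image carries a natural $\mathrm{PGL}_2(\GF{p^d})$-action by Möbius substitutions with $\GF{p^d}$-coefficients, and one verifies this action is transitive with trivial stabilizer on generic $u\notin\GF{p^{2d}}$; matching the resulting image cardinality with Bluher's count then forces $\phi$ to surject onto $\{a\in\GF{Q}^*:F(a)=0\}$, yielding $(3)$. This Galois-theoretic fiber analysis is the subtlest ingredient, reflecting the tension between Proposition~\ref{Prop.A_r(x)=0}'s natural $\GF{q^m}$-parametrization and the desired parametrization over $\GF{Q}$.
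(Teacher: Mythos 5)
This lemma is imported by the paper from \cite{KCM19} as a prerequisite and is not proved here, so there is no in-paper argument to measure you against; I am assessing your attempt on its own. Your $(3)\Rightarrow(1)$ is sound: with $v=u-u^q$, $w=u-u^{q^2}=v+v^q$ one has $x_0=-v/w$ and $P_a(x_0)=0$, and for $x_\alpha=(u+\alpha)^{q^2-q}x_0$ the numerator of $P_a(x_\alpha)$ factors as $\frac{z_0-z_1}{z_1}\bigl(z_3(z_1-z_2)-z_2(z_1-z_3)+z_1(z_2-z_3)\bigr)=0$ where $z_i=(u+\alpha)^{q^i}$, so the ``analogous direct computation'' does go through. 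Two small repairs are needed: in the distinctness step the correct conclusion is $u\in\GF{q}\cap\GF{Q}=\GF{p^d}\subseteq\GF{p^{2d}}$, since the inclusion $\GF{q}\subseteq\GF{p^{2d}}$ you invoke is false in general (e.g.\ $n=2$, $k=6$); and Proposition~\ref{Prop.A_r(x)=0} is stated only for $r\geq 3$, so the degenerate cases $m\leq 2$ (where $F(a)=\pm1$ and all three statements fail) should be dispatched separately. The implication $(1)\Rightarrow(2)$ via Lemma~\ref{lem.quadEq} is correct.

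The genuine gap is in $(2)\Rightarrow(3)$, and it is twofold. First, your counting scheme only ever produces one of the two inequalities it needs: $(1)\Rightarrow(2)$ together with Bluher's value of $M_{p^d+1}$ gives $|\{a\in\GF{Q}^*:F(a)=0\}|\geq M_{p^d+1}$, but nothing you have established yields the matching upper bound, and that upper bound is essentially equivalent to the implication $(2)\Rightarrow(1)$ you are trying to prove --- so the claim that the cardinality is ``pinned down'' is circular as written. Closing it requires an independent count of the $\GF{Q}$-zeros of $A_m$ (for instance by running the analogous $\mathrm{PGL}_2(\GF{q})$-orbit count on the $\GF{q^m}$-parametrization of Proposition~\ref{Prop.A_r(x)=0} and then controlling which of those values descend to $\GF{Q}$ --- itself a nontrivial step). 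Second, the structural claim that $\mathrm{PGL}_2(\GF{p^d})$ acts \emph{transitively} on each fiber of $\phi$ over $\GF{Q}\setminus\GF{p^{2d}}$ is asserted with ``one verifies'' but not argued; freeness is easy (a non-scalar fixed point satisfies a quadratic over $\GF{p^d}$ and so lies in $\GF{p^{2d}}$), but transitivity is exactly as hard as recovering $u$ from $a$, which is where the real content of $(2)\Rightarrow(3)$ lives. A more robust route is to construct $u\in\GF{Q}$ directly from the known rational roots (e.g.\ using that ratios of roots have the form $((u+\alpha)^q)^{q-1}$, so that a root of $P_a$ determines a coset of $\ker$ of the associated linearized polynomial $X^{q^2}+X^q+aX$ inside $\GF{Q}$), rather than arguing by cardinality.
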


\begin{lemma}[\cite{KCM19}]\label{Am+1}
If $A_{m}(a)=0$, then  for any $x\in \GF{Q}$ such that
$x^{q+1}+x+a=0$, it holds
\[
A_{m+1}(a)=Nr_{k}^{km}(x)\in \GF{p^d}.
\]
Furthermore, for any $t\geq 0$
\begin{equation}\label{Am_recur}
A_{m+t}(a)=A_{m+1}(a)\cdot A_{t}(a).
\end{equation}
\end{lemma}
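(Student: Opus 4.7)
The plan is to attach to any zero $x\in\GF{Q}$ of $P_a(X)$ an auxiliary sequence that satisfies the \emph{same} linear recurrence as $\{A_r(a)\}$, so that a Wronskian identity, combined with Lemma~\ref{Lem.newdef}(2), makes both parts of the lemma collapse almost automatically.

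Concretely, I would define $c_r:=x^{(q^r-1)/(q-1)}=x\cdot x^q\cdots x^{q^{r-1}}$ with $c_0:=1$ and $c_1=x$. Since $d=\gcd(n,k)$ forces $n\mid km$, one has $x^{q^m}=x$, so $c_m=Nr_k^{km}(x)\in\GF{q}$ (it is fixed by $z\mapsto z^q$). Iterating $x^{q+1}=-x-a$ under Frobenius yields $x^{q^r}\cdot x^{q^{r+1}}=-x^{q^r}-a^{q^r}$, hence
\[
c_{r+2}=c_r\cdot x^{q^r}\cdot x^{q^{r+1}}=-c_{r+1}-a^{q^r}c_r,
\]
which is precisely the recurrence of Lemma~\ref{Lem.newdef}(1). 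Letting $E_r$ be the companion solution determined by $E_0=1$, $E_1=0$ (so that $\{A_r(a),E_r\}$ spans the two-dimensional solution space), one gets $c_r=E_r+x\,A_r(a)$, and in particular $c_m=E_m$ because $A_m(a)=0$.

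Next I would form the Wronskian $W_r:=A_{r+1}(a)E_r-A_r(a)E_{r+1}$. Substituting the recurrence in both factors gives $W_{r+1}=a^{q^r}W_r$, and since $W_0=1$ one has $W_r=a^{(q^r-1)/(q-1)}$. At $r=m$, the vanishing of $A_m(a)$ reduces this to $A_{m+1}(a)\cdot E_m=a^{(q^m-1)/(q-1)}$. Comparing with Lemma~\ref{Lem.newdef}(2) at $r=m$, which reads $A_{m+1}(a)^{q+1}=a^{q(q^m-1)/(q-1)}=(A_{m+1}(a)E_m)^q$, and using that $A_{m+1}(a)\neq 0$ (because $a\neq 0$), dividing by $A_{m+1}(a)^q$ yields $A_{m+1}(a)=E_m^q$. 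But $E_m=c_m\in\GF{q}$, so $E_m^q=E_m$, giving $A_{m+1}(a)=E_m=Nr_k^{km}(x)$. The membership in $\GF{p^d}$ is then automatic: $A_{m+1}(a)\in\GF{Q}$ (polynomial with $\GF{p}$-coefficients evaluated at $a\in\GF{Q}$) and $A_{m+1}(a)\in\GF{q}$ (it is a norm to $\GF{q}$), so $A_{m+1}(a)\in\GF{Q}\cap\GF{q}=\GF{p^d}$.

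For the ``furthermore'' identity the key observation is that $a^{q^m}=a$ (since $a\in\GF{Q}$), hence $a^{q^{m+t}}=a^{q^t}$ for every $t$. Consequently, the shifted sequence $B_t:=A_{m+t}(a)$ satisfies the \emph{same} recurrence $B_{t+2}=-B_{t+1}-a^{q^t}B_t$ as $\{A_t(a)\}$. Checking the bases $B_1=A_{m+1}(a)=A_{m+1}(a)\,A_1(a)$ and $B_2=A_{m+2}(a)=-A_{m+1}(a)-a^{q^m}A_m(a)=-A_{m+1}(a)=A_{m+1}(a)\,A_2(a)$, a one-line induction on $t$ delivers $A_{m+t}(a)=A_{m+1}(a)\cdot A_t(a)$ for all $t\geq 1$; the case $t=0$ is just the hypothesis $A_m(a)=0$. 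The only genuine obstacle worth flagging is the first step---recognising that the correct auxiliary sequence to attach to $x$ is the running products $c_r$, rather than the individual Frobenius iterates $x^{q^i}$; once this is done, the Wronskian computation together with the norm identity of Lemma~\ref{Lem.newdef}(2) does all the work.
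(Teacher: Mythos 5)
Your proof is correct. Note that the paper itself gives no proof of this lemma --- it is imported verbatim from \cite{KCM19} --- so there is no in-paper argument to compare against; I can only certify your derivation, which is sound and self-contained. The key construction, attaching to a root $x$ the running products $c_r=x^{(q^r-1)/(q-1)}$ and observing that $x^{q+1}=-x-a$ forces $c_{r+2}=-c_{r+1}-a^{q^r}c_r$, i.e.\ exactly the recurrence of Lemma~\ref{Lem.newdef}(1), is the right move; the decomposition $c_r=E_r+xA_r(a)$, the Wronskian evaluation $A_{r+1}(a)E_r-A_r(a)E_{r+1}=a^{(q^r-1)/(q-1)}$, and the comparison with the norm identity \eqref{eq.Norm} at $r=m$ then do give $A_{m+1}(a)=E_m^q=E_m=Nr_k^{km}(x)$, and the field membership follows from $\GF{Q}\cap\GF{q}=\GF{p^d}$ as you say. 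Two small points worth making explicit: your basis argument and your treatment of $t=0$ both rely on the backward extension $A_0(a)=0$, which is not in the paper's definition \eqref{eq.defA} but is forced by running \eqref{eq.defAA} at $r=0$ (since $A_2=-A_1-XA_0$ gives $-1=-1-XA_0$); and the justification ``$A_{m+1}(a)\neq 0$ because $a\neq 0$'' should be tied explicitly to \eqref{eq.Norm} with $A_m(a)=0$, which yields $A_{m+1}(a)^{q+1}=a^{q(q^m-1)/(q-1)}\neq 0$ --- the parenthetical as written is too terse to stand alone. Neither point is a gap, only a matter of stating the conventions you are using.
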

In \cite{KCM19}, it is remained as an open problem to explicitly
compute the $p^d+1$ rational zeros.

\section{Completing the case $N_a=p^d+1$}\label{complete}
Thanks to Lemma~\ref{Lem.p^d+1cond}, throughout this section we
assume $F(a)=0$ i.e. $$A_m(a)=0.$$ Let
$$L_a(X):=X^{q^2}+X^q+aX \in \GF{Q}[X].$$ Define the sequence of polynomials
$\{B_r(X)\}$ as follows:
\begin{equation}\label{eq_12}
B_1(X)=0, B_{r+1}(X)=-a\cdot A_r(X)^q.
\end{equation}
From Lemma~\ref{Am+1} and the definition \eqref{eq.defA} it follows
\begin{equation}\label{Bm}
B_m(a)=-aA_{m-1}(a)^q=A_{m+1}(a)^{\frac{1}{q}}\in\GF{p^d}.
\end{equation}
Using \eqref{Am_recur} and an induction on $l$ it is easy to check:
\begin{proposition}
\begin{equation}\label{eq_13_1}
B_{l\cdot m}(a)=B_m(a)^l.
\end{equation}
for any integer $l\geq 1$.
\end{proposition}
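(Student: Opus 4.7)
The plan is to proceed by induction on $l$, the base case $l=1$ being immediate from $B_m(a)=B_m(a)^1$. For the inductive step, I would assume $B_{lm}(a)=B_m(a)^l$ and work directly from \eqref{eq_12}:
\[
B_{(l+1)m}(a)=-a\cdot A_{(l+1)m-1}(a)^q.
\]
The key move is to decompose $(l+1)m-1=m+(lm-1)$ and apply the identity \eqref{Am_recur} of Lemma~\ref{Am+1} with $t=lm-1$ (note that $lm-1\geq m-1\geq 1$, since the hypothesis $A_m(a)=0$ forces $m\geq 2$), which gives $A_{(l+1)m-1}(a)=A_{m+1}(a)\cdot A_{lm-1}(a)$. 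Substituting and reassembling $-a$ inside the $q$-th power,
\[
B_{(l+1)m}(a)=A_{m+1}(a)^q\cdot\bigl(-a\, A_{lm-1}(a)^q\bigr)=A_{m+1}(a)^q\cdot B_{lm}(a),
\]
so by the induction hypothesis $B_{(l+1)m}(a)=A_{m+1}(a)^q\cdot B_m(a)^l$.

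It remains to prove $A_{m+1}(a)^q=B_m(a)$. Equation \eqref{Bm} gives $B_m(a)=A_{m+1}(a)^{1/q}$, hence $A_{m+1}(a)=B_m(a)^q$ and therefore $A_{m+1}(a)^q=B_m(a)^{q^2}$. Now \eqref{Bm} also asserts $B_m(a)\in\GF{p^d}$, and because $d\mid k$ we have $p^d-1\mid q^2-1$; consequently $B_m(a)^{q^2}=B_m(a)$ (trivially when $B_m(a)=0$, and otherwise because the order of $B_m(a)$ in $\GF{p^d}^{*}$ divides $q^2-1$). Combining with the previous display yields $B_{(l+1)m}(a)=B_m(a)^{l+1}$, which closes the induction.

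The main (and essentially only) subtlety is this final exponent reduction: the calculation naturally produces the power $B_m(a)^{(l-1)q^2+1}$ (or, in the incremental form above, $A_{m+1}(a)^q=B_m(a)^{q^2}$), and collapsing it back to $B_m(a)^l$ depends crucially on the fact, supplied by Lemma~\ref{Am+1} together with \eqref{Bm}, that $B_m(a)$ lies in the small subfield $\GF{p^d}$ fixed by the $q^2$-Frobenius. Everything else is a mechanical unfolding of the recursions in \eqref{eq.defA} and \eqref{eq_12}.
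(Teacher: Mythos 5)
Your proof is correct and takes essentially the route the paper intends: the paper gives no written proof beyond ``Using \eqref{Am_recur} and an induction on $l$ it is easy to check,'' and your argument is precisely that induction, via the decomposition $(l+1)m-1=m+(lm-1)$ and the identity \eqref{Am_recur}, with the exponent collapse $A_{m+1}(a)^q=B_m(a)^{q^2}=B_m(a)$ correctly justified by $B_m(a)\in\GF{p^d}\subseteq\GF{q}$ from \eqref{Bm}. The details you supply (including checking $m\ge 2$ so that \eqref{Am_recur} is applied with $t=lm-1\ge 1$) are all sound.
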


The first step to solve the open problem is to induce
\begin{lemma}\label{lem_5} For any integer $r\geq 2$, in the ring $\GF{Q}[X]$ it holds
\begin{equation}\label{eq_14}
X^{q^r}=\sum_{i=1}^{r-1}A_{r-i}(a)^{q^i}\cdot
L_a(X)^{q^{i-1}}+A_r(a)\cdot X^q+B_r(a)\cdot X.
\end{equation}
\end{lemma}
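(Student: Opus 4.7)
The plan is to establish \eqref{eq_14} by induction on $r$, using three ingredients: the defining recursion \eqref{eq.defA} for $\{A_r\}$, the definition \eqref{eq_12} of $\{B_r\}$, and the tautological identity $X^{q^2}=L_a(X)-X^q-aX$ obtained from the definition of $L_a$.

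For the base case $r=2$, a direct expansion of the right-hand side of \eqref{eq_14} gives
$$A_1(a)^q L_a(X)+A_2(a)X^q+B_2(a)X = L_a(X)-X^q-aX=X^{q^2},$$
since $A_1=1$, $A_2=-1$ and $B_2(a)=-aA_1(a)^q=-a$.

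For the inductive step, assume \eqref{eq_14} holds for some $r\geq 2$. Raising the identity to the power $q$ in $\GF{Q}[X]$ (which simply pushes the Frobenius through the coefficients, since these coefficients lie in $\GF{Q}$) yields
$$X^{q^{r+1}}=\sum_{i=1}^{r-1}A_{r-i}(a)^{q^{i+1}}L_a(X)^{q^i}+A_r(a)^qX^{q^2}+B_r(a)^qX^q.$$
Next, I would substitute $X^{q^2}=L_a(X)-X^q-aX$, reindex the sum with $j=i+1$, and absorb the newly produced $A_r(a)^qL_a(X)$ as the $j=1$ term of the re-indexed sum. This reorganizes the expression into
$$X^{q^{r+1}}=\sum_{j=1}^{r}A_{r+1-j}(a)^{q^j}L_a(X)^{q^{j-1}}+\bigl(-A_r(a)^q+B_r(a)^q\bigr)X^q - aA_r(a)^qX.$$

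It then remains to identify the two remaining coefficients with $A_{r+1}(a)$ and $B_{r+1}(a)$. The coefficient of $X$ is $-aA_r(a)^q$, which is exactly $B_{r+1}(a)$ by \eqref{eq_12}. For the coefficient of $X^q$, I would use $B_r(a)^q=-a^qA_{r-1}(a)^{q^2}$ (again from \eqref{eq_12}) to rewrite it as $-A_r(a)^q-a^qA_{r-1}(a)^{q^2}$, and this equals $A_{r+1}(a)$ by the defining recursion \eqref{eq.defA} with index shift. This completes the induction. The main obstacle is purely bookkeeping: keeping the Frobenius powers on the $A_{r-i}(a)$ aligned correctly during the reindexing, and checking that the residual $X^q$ coefficient is precisely what the recursion \eqref{eq.defA} (rather than its sibling \eqref{eq.defAA}) delivers.
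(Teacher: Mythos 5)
Your proposal is correct and follows essentially the same route as the paper: induction on $r$ with base case $X^{q^2}=L_a(X)-X^q-aX$, then raising to the $q$-th power, substituting for $X^{q^2}$, reindexing so that $A_r(a)^qL_a(X)$ becomes the first summand, and identifying the residual coefficients of $X^q$ and $X$ with $A_{r+1}(a)$ and $B_{r+1}(a)$ via \eqref{eq.defA} and \eqref{eq_12}. The coefficient bookkeeping you describe matches the paper's computation exactly.
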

\begin{proof}
The equality~\eqref{eq_14} for $r=2$ is $X^{q^2}=L_a(X)-X^q-aX$
which is valid by the definition of $L_a(X)$. Suppose the
equality~\eqref{eq_14} holds for $r\geq 2$. By raising $q-$th power
to both sides of the equality~\eqref{eq_14}, we get
\begin{align*}
X^{q^{r+1}}&=\sum_{i=1}^{r-1}A_{r-i}(a)^{q^{i+1}}\cdot
L_a(X)^{q^i}+A_r(a)^q\cdot X^{q^2}+B_r(a)^q\cdot X^q\\
&=\sum_{i=2}^{r}A_{r+1-i}(a)^{q^{i}}\cdot
L_a(X)^{q^{i-1}}+A_r(a)^q\cdot X^{q^2}+B_r(a)^q\cdot X^q\\
&=\sum_{i=2}^{(r+1)-1}A_{r+1-i}(a)^{q^i}\cdot
L_a(X)^{q^{i-1}}+A_r(a)^q\cdot L_a(X)-A_r(a)^q\cdot X^q\\
&-a\cdot A_r(a)^q\cdot x+B_r(a)^q\cdot X^q\\
&=\sum_{i=1}^{(r+1)-1}A_{r+1-i}(a)^{q^i}\cdot
L_a(X)^{q^{i-1}}+A_{r+1}(a)\cdot X^q+B_{r+1}(a)\cdot X,
\end{align*}
where the last equality follows from the definitions \eqref{eq_12}
and \eqref{eq.defA}. This shows that the equality~\eqref{eq_14}
holds also for $r+1$ and so for all $r\geq 2$.\qed
\end{proof}
For $r=m$, under the assumption $A_m(a)=0$,  Lemma \ref{lem_5} gives
\[
X^{q^m}=\sum_{i=1}^{m-1}A_{m-i}(a)^{q^i}\cdot
L_a(X)^{q^{i-1}}+B_m(a)\cdot X.
\]
Now, we define
\begin{equation}\label{def_f1}
F_1(X):=X^{q^m}-B_m(a)\cdot X=\sum_{i=1}^{m-1}A_{m-i}(a)^{q^i}\cdot
L_a(X)^{q^{i-1}}\in \GF{p^d}[X]
\end{equation}
and
\begin{equation}\label{def_g1}
G_1(X)=\sum_{i=1}^{m-1}A_{m-i}(a)^{q^i}\cdot X^{q^{i-1}}.
\end{equation}
Then, evidently,
\begin{equation}\label{F1La}
F_1(X)=G_1\circ L_a(X).
\end{equation}
Furthermore, we can show
\begin{proposition}\label{lem_8}
\[
F_1(X)=L_a\circ G_1(X).
\]
\end{proposition}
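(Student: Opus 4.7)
My plan is to compute $L_a\circ G_1(X)$ directly and match it coefficient-by-coefficient against
\[
F_1(X) = X^{q^m} - B_m(a)\,X.
\]
Since \eqref{F1La} already gives $G_1\circ L_a = F_1$, establishing $L_a\circ G_1 = F_1$ will yield the claimed commutation.

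First I would expand
\[
L_a(G_1(X)) = G_1(X)^{q^2} + G_1(X)^q + a\,G_1(X)
\]
and, after the obvious shift of summation index in each of the three pieces, read off the coefficient of $X^{q^j}$ for $0\le j\le m$. With the convention that terms whose $A$-index falls outside $\{1,\ldots,m-1\}$ are absent, this coefficient is
\[
c_j = A_{m-j+1}(a)^{q^{j+1}} + A_{m-j}(a)^{q^{j+1}} + a\,A_{m-j-1}(a)^{q^{j+1}}.
\]

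Second, I would dispose of the boundary indices directly: $c_0 = a\,A_{m-1}(a)^{q} = -B_m(a)$ by \eqref{Bm}; $c_m = A_1(a)^{q^{m+1}} = 1$ since $A_1 \equiv 1$; and $c_{m-1} = A_2(a)^{q^m} + A_1(a)^{q^m} = (-1+1)^{q^m} = 0$. The heart of the argument is the vanishing $c_j = 0$ for $1\le j\le m-2$. For this I would invoke the alternative recurrence \eqref{eq.defAA} with $r = m-j-1$, namely
\[
A_{m-j+1}(a) + A_{m-j}(a) = -a^{q^{m-j-1}} A_{m-j-1}(a),
\]
where for $j=1$ this collapses to the standing hypothesis $A_m(a)=0$. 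Raising both sides to the $q^{j+1}$-th power and invoking the key arithmetic identity $a^{q^m} = a$---valid because $a\in\GF{Q}$ and $q^m = p^{km} = Q^{k/d}$ is a Frobenius iterate fixing $\GF{Q}$---gives $A_{m-j+1}(a)^{q^{j+1}} + A_{m-j}(a)^{q^{j+1}} = -a\,A_{m-j-1}(a)^{q^{j+1}}$, so that $c_j = 0$.

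The main obstacle I anticipate is pure bookkeeping: keeping the three shifted index ranges straight and selecting the right form of the $A_r$ recurrence (\eqref{eq.defA} vs.\ \eqref{eq.defAA}) so that exponents line up after the final Frobenius twist. Conceptually, though, the entire identity pivots on the single observation $a^{q^m}=a$, which is exactly what converts the additive recursion among the $A_r(a)$'s into the multiplicative cancellation demanded by $L_a\circ G_1 = F_1 = G_1\circ L_a$.
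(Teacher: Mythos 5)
Your proof is correct and follows essentially the same route as the paper: expand $L_a\circ G_1$ into the three Frobenius-shifted copies of $G_1$, reindex, and cancel the interior coefficients via the recurrence \eqref{eq.defAA}, with $a^{q^m}=a$ making the coefficients line up. The only organizational difference is that you read off coefficients $c_j$ one at a time and absorb the $m=3$ case uniformly (using $A_m(a)=0$ at the boundary index $j=1$), whereas the paper treats $m=3$ separately and manipulates the three sums wholesale for $m\ge 4$.
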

\begin{proof}
When $m=3$, $A_3(a)=0$ is equivalent to $a=1$. Therefore, one has
$F_1(X)=X^{q^3}-X=(X^q-X)^{q^2}+(X^q-X)^{q}+(X^q-X)=L_a\circ
G_1(X)$.

Now, suppose $m\geq 4$. Then, by using Definition~\eqref{eq_12}
\begin{align*}
&L_a\circ
G_1(X)=\\&\sum_{i=1}^{m-1}A_{m-i}(a)^{q^{i+2}}\cdot X^{q^{i+1}}+\sum_{i=1}^{m-1}A_{m-i}(a)^{q^{i+1}}\cdot X^{q^{i}}+\sum_{i=1}^{m-1}aA_{m-i}(a)^{q^i}\cdot X^{q^{i-1}}\\
&=\sum_{i=2}^{m}A_{m+1-i}(a)^{q^{i+1}}\cdot X^{q^{i}}+\sum_{i=1}^{m-1}A_{m-i}(a)^{q^{i+1}}\cdot X^{q^{i}}+\sum_{i=0}^{m-2}aA_{m-1-i}(a)^{q^{i+1}}\cdot X^{q^{i}}\\
&=X^{q^m}-B_m(a)\cdot X=F_1(X),
\end{align*}
where Equality~\eqref{eq.defAA} was exploited to deduce the last
second equality. \qed
\end{proof}

By \eqref{Am_recur}, from $A_m(a)=0$ it follows $A_{l\cdot m}(a)=0$
for any $l\geq 1$. Therefore, \eqref{eq_13_1} and \eqref{eq_14} for
$r=lm$ yield that for any $l\geq 1$
\begin{equation}\label{eq_16}
X^{q^{l\cdot m}}-B_m(a)^l\cdot X=\sum_{i=1}^{l\cdot m-1}A_{l\cdot
m-i}(a)^{q^i}\cdot L_a(X)^{q^{i-1}}.
\end{equation}
\begin{proposition}\label{lem_6}
Relation (\ref{eq_16}) can be rewritten by using $F_1(X)$ as
follows:
\begin{equation}\label{eq_17}
X^{q^{l\cdot m}}-B_m(a)^l\cdot
X=\sum_{i=0}^{l-1}B_{m}(a)^{l-1-i}\cdot F_1(X)^{q^{m\cdot i}}.
\end{equation}
\end{proposition}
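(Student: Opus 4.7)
The plan is to establish (\ref{eq_17}) directly by a telescoping argument, bypassing (\ref{eq_16}) entirely. The key observation is that $B_m(a)$ lies in the subfield $\GF{p^d}$ by (\ref{Bm}), and since $d\mid k$ we have $\GF{p^d}\subseteq\GF{q}$; hence $B_m(a)^{q^t}=B_m(a)$ for every $t\geq 0$. Consequently, raising the defining identity $F_1(X)=X^{q^m}-B_m(a)\cdot X$ to the $q^{mj}$-th power yields
\[
F_1(X)^{q^{mj}}=X^{q^{m(j+1)}}-B_m(a)\cdot X^{q^{mj}},
\]
so each summand on the right-hand side of (\ref{eq_17}) is a difference of two monomials in the variables $X^{q^{ms}}$, with coefficients that are powers of $B_m(a)$.

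I would then substitute this into the right-hand side of (\ref{eq_17}) and split the sum:
\[
\sum_{j=0}^{l-1}B_m(a)^{l-1-j}F_1(X)^{q^{mj}}=\sum_{j=0}^{l-1}B_m(a)^{l-1-j}X^{q^{m(j+1)}}-\sum_{j=0}^{l-1}B_m(a)^{l-j}X^{q^{mj}}.
\]
Reindexing the first sum by $j\mapsto j-1$ turns it into $\sum_{j=1}^{l}B_m(a)^{l-j}X^{q^{mj}}$; the terms for $j\in\{1,\dots,l-1\}$ cancel against the corresponding terms of the second sum, leaving only the two boundary contributions $X^{q^{lm}}$ (from $j=l$) and $-B_m(a)^l X$ (from $j=0$). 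Their difference is precisely the left-hand side of (\ref{eq_17}).

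There is essentially no obstacle; the only non-trivial ingredient beyond elementary algebra is the subfield membership $B_m(a)\in\GF{p^d}$, already recorded in (\ref{Bm}). An equivalent alternative is induction on $l$, the base case $l=1$ being the definition of $F_1$ in (\ref{def_f1}), and the inductive step being the identity
\[
X^{q^{(l+1)m}}-B_m(a)^{l+1}X=\bigl(X^{q^{lm}}-B_m(a)^l X\bigr)^{q^m}+B_m(a)^l\cdot F_1(X),
\]
to which the induction hypothesis applies after one more use of $B_m(a)^{q^m}=B_m(a)$. Deriving (\ref{eq_17}) from (\ref{eq_16}) as the wording of the proposition suggests is also possible, but would require partitioning $\{1,\dots,lm-1\}$ into $l$ blocks of size $m-1$ (the multiples of $m$ dropping out thanks to $A_{sm}(a)=0$, which follows from (\ref{Am_recur})) and reducing each block's coefficients through (\ref{Am_recur}) together with the identity $A_{m+1}(a)=B_m(a)^q$ — more bookkeeping for no real conceptual gain.
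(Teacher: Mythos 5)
Your proof is correct and rests on the same ingredients as the paper's: the identity $F_1(X)^{q^{mj}}=X^{q^{m(j+1)}}-B_m(a)X^{q^{mj}}$ together with $B_m(a)\in\GF{p^d}\subseteq\GF{q}$. The paper proves \eqref{eq_17} by induction on $l$ exactly as in the ``equivalent alternative'' you sketch at the end; your telescoping sum is just that induction unrolled, so the two arguments are substantively identical.
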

\begin{proof}
If $l=1$, the equality is equivalent to the definition of $F_1(X)$.
Suppose that it holds for $l\geq 2$. By raising $q^m-$th power to
both sides of (\ref{eq_17}), we have
\begin{align*}
X^{q^{(l+1)m}}-B_m(a)^l\cdot
X^{q^m}&=\sum_{i=0}^{l-1}B_{m}(a)^{l-1-i}\cdot
F_1(X)^{q^{m\cdot (i+1)}}\\
&=\sum_{i=1}^{(l+1)-1}B_{m}(a)^{(l+1)-1-i}\cdot F_1(X)^{q^{m\cdot
i}}.
\end{align*}
Since
\begin{align*}
X^{q^{(l+1)m}}-B_m(a)^l\cdot X^{q^m}&=X^{q^{(l+1)m}}-B_m(a)^l\cdot
F_1(X)-B_m(a)^{l+1}\cdot X,
\end{align*}
one has
\begin{align*}
X^{q^{(l+1)m}}-B_m(a)^{l+1}\cdot
X&=\sum_{i=1}^{(l+1)-1}B_{m}(a)^{(l+1)-1-i}\cdot F_1(X)^{q^{m\cdot
i}}+B_m(a)^l\cdot F_1(X)\\
&=\sum_{i=0}^{(l+1)-1}B_{m}(a)^{(l+1)-1-i}\cdot F_1(X)^{q^{m\cdot
i}}
\end{align*}
This shows that Equality~\eqref{eq_17} holds for all $l\geq 1$.\qed
\end{proof}

 Define
\begin{equation}\nonumber
\begin{array}{c}
N:=(p^d-1)\cdot m,\\
G_2(X)=\sum_{i=0}^{p^d-2}B_{m}(a)^{p^d-2-i}\cdot X^{q^{m\cdot i}}.
\end{array}
\end{equation}
Since $F_1(X)$ and $G_2(X)$ are $p^d-$linearized polynomials over
$\GF{p^d}$, they are commutative under the symbolic multiplication
``$\circ$" (see e.g. 115 page in \cite{Lidl1997}). Therefore,
regarding Equation~\eqref{eq_17} and Proposition~\ref{lem_8}, one
has
\begin{equation}\label{eq_18_0}
X^{q^N}-X=G_2\circ F_1(X)=F_1\circ G_2(X)=L_a\circ G_1\circ G_2(X)
\end{equation}
and consequently
\begin{equation}\label{eq_18}
\ker(F_1)=G_2(\GF{q^N}),
\end{equation}
\begin{equation}\label{eq_19}
\ker(L_a)=G_1\circ G_2(\GF{q^N}).
\end{equation}
Since $L_a(X)=XP_a(X^{q-1})$, here we can state:
\begin{proposition} For $a\in \GF{Q}^*$,
\begin{equation}\label{closedsol}
\{x\in \overline{\GF{p}}\mid x^{q+1}+x+a=0\}=\{x^{q-1}\mid x\in
G_1\circ G_2(\GF{q^N})\}\setminus \{0\}.
\end{equation}
\end{proposition}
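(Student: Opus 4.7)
The plan is to exploit the factorization $L_a(X) = X \cdot P_a(X^{q-1})$ (noted just before the statement) in order to transfer the problem of finding roots of $P_a$ to the already-solved problem of computing $\ker(L_a)$, which by Equation~\eqref{eq_19} equals $G_1 \circ G_2(\GF{q^N})$.

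First I would verify the factorization directly by expansion: $X \cdot P_a(X^{q-1}) = X \bigl(X^{(q-1)(q+1)} + X^{q-1} + a\bigr) = X^{q^2} + X^q + aX = L_a(X)$. Hence for every $x \in \overline{\GF{p}}$, $L_a(x) = 0$ if and only if $x = 0$ or $P_a(x^{q-1}) = 0$.

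Next I would establish the set equality $\{\,y \in \overline{\GF{p}} \mid P_a(y) = 0\,\} = \{\,x^{q-1} \mid x \in \ker(L_a) \setminus \{0\}\,\}$. The inclusion ``$\supseteq$'' is immediate from the factorization: if $x \neq 0$ and $L_a(x) = 0$ then $P_a(x^{q-1}) = 0$. For ``$\subseteq$'', note that $a \in \GF{Q}^*$ gives $P_a(0) = a \neq 0$, so every root $y$ of $P_a$ is nonzero and admits a $(q-1)$-th root $x \neq 0$ in $\overline{\GF{p}}$; then $L_a(x) = x\cdot P_a(y) = 0$, hence $x \in \ker(L_a) \setminus \{0\}$ and $y = x^{q-1}$.

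Finally, substituting $\ker(L_a) = G_1 \circ G_2(\GF{q^N})$ from~\eqref{eq_19} yields the claim; the ``$\setminus \{0\}$'' on the right-hand side accounts for the value $0^{q-1}$ coming from $0 \in G_1 \circ G_2(\GF{q^N})$, which lies in that set because $G_1$ and $G_2$ are linearized polynomials. There is essentially no obstacle here: all substantive work has already been packaged into Proposition~\ref{lem_6} and into the commutativity argument giving~\eqref{eq_19}, and what remains is only the short bookkeeping above.
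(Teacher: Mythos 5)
Your proof is correct and follows essentially the same route as the paper, which states the proposition as an immediate consequence of the factorization $L_a(X)=XP_a(X^{q-1})$ together with Equality~\eqref{eq_19}; you have merely spelled out the bookkeeping (nonvanishing of roots of $P_a$, existence of $(q-1)$-th roots in $\overline{\GF{p}}$, and the role of $0$) that the paper leaves implicit.
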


Our goal is to determine $S_a=\{x\in \GF{Q} \mid P_a(x)=0\}$, the
set of all $\GF{Q}-$zeros to $P_a(X)=X^{q+1}+X+a=0, a\in\GF{Q}$.

\begin{remark}\label{lem_9} In order to find the
$\GF{Q}-$zeros of $P_a(X)$ it is not enough to consider the
$\GF{Q}-$zeros of $L_a(X)$. In fact,  one can see that $B_m(a)\neq
1$ in general. However, it holds:
\begin{proposition}
$L_a(X)=0$ has a solution in $\GF{Q}^*$ if and only if $B_m(a)=1$.
\end{proposition}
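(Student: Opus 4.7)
The plan is to prove the two implications separately, using the identity of Lemma~\ref{lem_5} for the forward direction and the parametrization of Lemma~\ref{Lem.p^d+1cond} for the converse.

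For the forward direction I would specialize Lemma~\ref{lem_5} at $r=m$. Since we are assuming $A_m(a)=0$, equation~(\ref{eq_14}) rearranges to
\[
X^{q^m}-B_m(a)\cdot X=\sum_{i=1}^{m-1}A_{m-i}(a)^{q^i}\cdot L_a(X)^{q^{i-1}}.
\]
Plugging in a putative $y\in \GF{Q}^*$ with $L_a(y)=0$ kills the right-hand side, leaving $y^{q^m}=B_m(a)\cdot y$. Because $km=(k/d)n$ is a multiple of $n$, one has $y^{q^m}=y$ in $\GF{Q}$, and cancelling $y\neq 0$ yields $B_m(a)=1$.

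For the converse I would exploit the factorization $L_a(X)=X\cdot P_a(X^{q-1})$: a nonzero $\GF{Q}$-root of $L_a$ exists exactly when some $\GF{Q}$-root of $P_a$ is a $(q-1)$-th power in $\GF{Q}^*$. Since $\gcd(q-1,Q-1)=p^d-1$, this is detected by the norm: $z\in \GF{Q}^*$ is a $(q-1)$-th power iff $N_{\GF{Q}/\GF{p^d}}(z)=z^{(Q-1)/(p^d-1)}=1$. Moreover, the explicit formulas of Lemma~\ref{Lem.p^d+1cond} show any two $\GF{Q}$-roots of $P_a$ differ by the factor $((u+\alpha)^q)^{q-1}$, already a $(q-1)$-th power, so the property of being a $(q-1)$-th power is uniform across all $p^d+1$ roots. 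It therefore suffices to prove $B_m(a)=N_{\GF{Q}/\GF{p^d}}(x)$ for a single $\GF{Q}$-root $x$ of $P_a$.

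This last identification is the main delicate step. Starting from Lemma~\ref{Am+1}, $A_{m+1}(a)=Nr_k^{km}(x)=x^{1+q+\cdots+q^{m-1}}$ for any $x\in\GF{Q}$ with $P_a(x)=0$. I would reduce the exponent modulo $Q-1$: because $\gcd(k/d,m)=1$, the residues $\{jk\bmod n:\,0\le j\le m-1\}$ are a permutation of $\{0,d,2d,\ldots,(m-1)d\}$, so on $\GF{Q}^*$ the exponent $1+q+\cdots+q^{m-1}$ acts as $1+p^d+\cdots+p^{d(m-1)}=(Q-1)/(p^d-1)$, which is precisely the norm exponent. Finally, because $d\mid k$ the $q$-power map is the identity on $\GF{p^d}$, so taking a $q$-th root of $A_{m+1}(a)\in\GF{p^d}$ does nothing, yielding $B_m(a)=A_{m+1}(a)=N_{\GF{Q}/\GF{p^d}}(x)$. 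The equivalence with $B_m(a)=1$ follows, which closes the proof.
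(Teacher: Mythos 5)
Your proof is correct, and while the forward implication is essentially the paper's (specialize Lemma~\ref{lem_5} at $r=m$, i.e.\ use $F_1=G_1\circ L_a$, and cancel the nonzero root to get $(1-B_m(a))x=0$), your converse takes a genuinely different route. The paper stays inside the linearized-polynomial machinery: assuming $B_m(a)=1$ it writes $X^{q^m}-X=L_a\circ G_1(X)$ and uses a degree argument ($\deg G_1<q^m$, and $\GF{q^m}$ is generated by $\GF{q}$ and $\GF{Q}$) to exhibit an explicit nonzero element $G_1(x_0)\in\ker(L_a)\cap\GF{Q}^*$ with $x_0\in\GF{Q}$. You instead factor $L_a(X)=XP_a(X^{q-1})$, reduce the question to whether a rational zero $x$ of $P_a$ is a $(q-1)$-th power in $\GF{Q}^*$, detect this by $x^{(Q-1)/(p^d-1)}=1$ (valid since $\gcd(q-1,Q-1)=p^d-1$), and identify $B_m(a)=A_{m+1}(a)=Nr_k^{km}(x)=N_{\GF{Q}/\GF{p^d}}(x)$ via Lemma~\ref{Am+1}, \eqref{Bm}, and the exponent reduction $\frac{q^m-1}{q-1}\equiv\frac{Q-1}{p^d-1}\pmod{Q-1}$ on $\GF{Q}^*$. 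Each of these steps checks out: $\gcd(k/d,m)=1$ does give the claimed permutation of residues, the $q$-power map is the identity on $\GF{p^d}$, and the "uniformity across roots" digression is actually superfluous since Lemma~\ref{Am+1} already applies to every rational root. What your route buys is the sharper intermediate fact $B_m(a)=N_{\GF{Q}/\GF{p^d}}(x)$ for any rational zero $x$ of $P_a(X)$, which connects naturally to the $(q-1)$-th power considerations appearing later (Lemma~\ref{lem_12} and the lemma on $x_0^2/a$); the cost is that it leans on the existence of the $p^d+1$ rational zeros (Lemma~\ref{Lem.p^d+1cond}) and on Lemma~\ref{Am+1}, whereas the paper's argument is self-contained in the composition identities and, unlike yours, produces a root of $L_a$ in $\GF{Q}^*$ explicitly.
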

\begin{proof}
If $L_a(x)=0$ for $x\in\GF{Q}^*$, then by \eqref{F1La} $F_1(x)=0$
i.e. $x^{q^m}-B_m(a)\cdot x=(1-B_m(a))\cdot x=0$ and consequently
$B_m(a)=1$. Conversely, assume $B_m(a)=1$. Then
$F_1(X)=X^{q^m}-X=L_a\circ G_1(X)$ and $\ker(L_a)=G_1(\GF{q^m})$.
Assume $G_1(\GF{Q})=\{0\}$. Then, since $G_1$ is $q-$linearized, it
holds $G_1(\GF{q^m})=G_1([\GF{q}, \GF{Q}])=\{0\}$ which contradicts
to $\deg(G_1)< q^m$. Thus there exists such a $x_0\in \GF{Q}^*$ that
$G_1(x_0)\neq 0$. Then $G_1(x_0)\in \ker(L_a)\cap \GF{Q}^*$.
\end{proof}

\end{remark}

To achieve the goal, we will further need the following lemmas.

\begin{lemma}\label{lem_10}
Let $L(X)$ be any $q-$linearized polynomial over $\GF{Q}$. If
$x_0^{q-1}\in \GF{Q}$, then $L(x_0)^{q-1}\in\GF{Q}$.
\end{lemma}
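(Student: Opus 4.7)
The plan is to reduce the claim to the observation that $\GF{Q}$ is stable under the map $z \mapsto z^q$, which is just the $k$-th iterate of the Frobenius of $\GF{Q}=\GF{p^n}$. Write $L(X)=\sum_{i\ge 0} c_i X^{q^i}$ with $c_i\in\GF{Q}$. The case $x_0=0$ is trivial since then $L(x_0)=0$, so I will assume $x_0\ne 0$ and set $y:=x_0^{q-1}\in\GF{Q}$.

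The key identity I would establish is
\begin{equation*}
x_0^{q^i-1}=\prod_{j=0}^{i-1} y^{q^j}\qquad (i\ge 0),
\end{equation*}
which follows from the factorization $q^i-1=(q-1)(1+q+\cdots+q^{i-1})$ (empty product when $i=0$). Since $\GF{Q}$ is closed under the $q$-th power map (that map is simply $\mathrm{Frob}^k$), each factor $y^{q^j}$ lies in $\GF{Q}$, hence so does the whole product. Multiplying by $c_i\in\GF{Q}$ and summing yields
\begin{equation*}
\frac{L(x_0)}{x_0}=\sum_{i\ge 0} c_i\, x_0^{q^i-1}=\sum_{i\ge 0} c_i\prod_{j=0}^{i-1} y^{q^j}\in\GF{Q}.
\end{equation*}

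Finally, I would conclude by writing
\begin{equation*}
L(x_0)^{q-1}=x_0^{q-1}\cdot\bigl(L(x_0)/x_0\bigr)^{q-1},
\end{equation*}
a product of two elements of $\GF{Q}$, hence in $\GF{Q}$. The only subtlety is the product/telescoping identity for $x_0^{q^i-1}$ and the observation that $\GF{Q}$ is stable under taking $q$-th powers even though in general $\GF{q}\not\subseteq\GF{Q}$; neither step is an obstacle, but both need to be stated carefully since the lemma does not assume $k\mid n$.
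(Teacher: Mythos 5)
Your proof is correct and follows essentially the same route as the paper's: both set $\lambda=x_0^{q-1}$, use the telescoping identity $x_0^{q^i}=\bigl(\prod_{j=0}^{i-1}\lambda^{q^j}\bigr)x_0$ together with the stability of $\GF{Q}$ under the $q$-th power map to conclude $L(x_0)=\overline{\lambda}x_0$ with $\overline{\lambda}\in\GF{Q}$, and then compute $L(x_0)^{q-1}=\overline{\lambda}^{q-1}\lambda\in\GF{Q}$. Your explicit attention to the $x_0=0$ case and to the fact that $\GF{q}\not\subseteq\GF{Q}$ in general is a slight gain in care, but the argument is the same.
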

\begin{proof}
If $x_0^{q-1}\in \GF{Q}$ i.e. $x_0^{q-1}=\lambda$ for some
$\lambda\in \GF{Q}$, then $x_0^{q}=\lambda x_0$ and subsequently
$x_0^{q^i}=\prod_{j=0}^{i-1}\lambda^{q^j} x_0$ for every $i\geq 1$.
Therefore,  when $L(X)$ is a $q-$linearized polynomial over
$\GF{Q}$, one can write $L(x_0)=\overline{\lambda} x_0$ for some
$\overline{\lambda}\in \GF{Q}$. Thus,
$L(x_0)^{q-1}=\overline{\lambda}^{q-1}\lambda\in \GF{Q}$.\qed
\end{proof}

\begin{lemma}\label{lem_12}
Let $s=\frac{(q^m-1)\cdot(p^{d}-1)}{(Q-1)\cdot (q-1)}$. If
$A_m(a)=0$ and $x_0\in \ker(F_1)$, then $x_0^{s} \in \ker(F_1)$ and
$(x_0^{s})^{q-1}\in\GF{Q}$.
\end{lemma}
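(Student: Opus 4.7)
\emph{Plan.} The plan is to reduce both conclusions to a single fact about $x_0$, namely $x_0^{q^m - 1} = B_m(a) \in \GF{p^d}$ (which is exactly what $x_0 \in \ker(F_1)$ says, using $F_1(X) = X^{q^m} - B_m(a)X$ from \eqref{def_f1} and $B_m(a) \in \GF{p^d}$ from \eqref{Bm}), together with one arithmetic congruence on $s$. The degenerate case $B_m(a) = 0$ forces $\ker(F_1) = \{0\}$ and both conclusions are trivial, so I may assume $B_m(a) \in \GF{p^d}^*$ and $x_0 \neq 0$.

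For the second conclusion, the very definition of $s$ rewrites as $s(q-1)(Q-1) = (q^m - 1)(p^d - 1)$, so
\[
x_0^{s(q-1)(Q-1)} = (x_0^{q^m - 1})^{p^d - 1} = B_m(a)^{p^d - 1} = 1,
\]
which gives $(x_0^s)^{q-1} \in \GF{Q}$. For the first conclusion, $(x_0^s)^{q^m} = (x_0^{q^m})^s = B_m(a)^s\, x_0^s$, so $x_0^s \in \ker(F_1)$ reduces to $B_m(a)^{s-1} = 1$, which in turn reduces to the arithmetic fact
\[
s \equiv 1 \pmod{p^d - 1}.
\]

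The main obstacle is proving this congruence (this also verifies that $s$ is an integer at all). I would prove it via a cyclotomic polynomial argument. Set $e := p^d$ and $k' := k/d$, so $\gcd(m, k') = 1$, $Q = e^m$, $q = e^{k'}$, and $s = f(e)$ for $f(X) := \frac{(X^{mk'} - 1)(X - 1)}{(X^m - 1)(X^{k'} - 1)}$. Using $X^n - 1 = \prod_{\delta \mid n} \Phi_\delta(X)$ together with the fact that coprimality of $m, k'$ makes every divisor of $mk'$ factor uniquely as $d_1 d_2$ with $d_1 \mid m$ and $d_2 \mid k'$, all factors $\Phi_{d_1 d_2}$ with $d_1 = 1$ or $d_2 = 1$ cancel (together with the extra $\Phi_1 = X - 1$), leaving
\[
f(X) = \prod_{\substack{d_1 \mid m,\; d_2 \mid k' \\ d_1 > 1,\; d_2 > 1}} \Phi_{d_1 d_2}(X) \in \mathbb{Z}[X].
\]
Since $f \in \mathbb{Z}[X]$, we have $s = f(e) \equiv f(1) \pmod{e - 1}$. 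For each factor, $d_1 d_2$ is divisible by two distinct primes, one coming from $d_1 > 1$ and a different one from $d_2 > 1$, because $\gcd(d_1, d_2) \mid \gcd(m, k') = 1$. So each $d_1 d_2$ is not a prime power, and the classical evaluation $\Phi_n(1) = 1$ at non-prime-powers forces $f(1) = 1$. This yields $s \equiv 1 \pmod{p^d - 1}$ and closes the argument; the cyclotomic identification of $f(X)$ is the only substantive step, everything else being bookkeeping with the exponent identity $x_0^{q^m - 1} = B_m(a)$.
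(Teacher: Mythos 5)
Your proof is correct, but the key step is handled by a genuinely different argument than the paper's. Both proofs dispose of the claim $(x_0^s)^{q-1}\in\GF{Q}$ by the same exponent computation $s(q-1)(Q-1)=(q^m-1)(p^d-1)$, and both reduce $x_0^s\in\ker(F_1)$ to showing $B_m(a)^s=B_m(a)$. The paper proves that identity by invoking Lemma~\ref{Lem.p^d+1cond}: since $N_a=p^d+1$, there is a nonzero $x_1$ with $L_a(x_1)=0$ and $x_1^{q-1}\in\GF{Q}$, and comparing $x_1^{q^m-1}=B_m(a)$ with the norm computation $x_1^{q^m-1}=\bigl(N_{\GF{Q}|\GF{p^d}}(x_1^{q-1})\bigr)^s=B_m(a)^s$ gives the claim. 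You instead prove the stronger, $a$-independent arithmetic fact $s\equiv 1\pmod{p^d-1}$ by writing $s=f(p^d)$ with $f(X)=\prod\Phi_{d_1d_2}(X)$ over $d_1\mid m$, $d_2\mid k/d$ with $d_1,d_2>1$, and evaluating at $X=1$ using $\Phi_n(1)=1$ for $n$ not a prime power; the coprimality $\gcd(m,k/d)=1$ is exactly what makes the cyclotomic bookkeeping and the non-prime-power observation work. Your route buys self-containedness (no appeal to the existence of rational roots of $P_a$) and, as a bonus, an actual proof that $s$ is an integer, which the paper takes for granted; the paper's route is shorter given the machinery already in place. One cosmetic remark: under the standing hypothesis $A_m(a)=0$, identity~\eqref{eq.Norm} with $r=m-1$ forces $A_{m-1}(a)\neq 0$, hence $B_m(a)\neq 0$, so your ``degenerate case'' $B_m(a)=0$ never actually occurs; including it is harmless but unnecessary.
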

\begin{proof} For $x_0=0$, the statement is trivial. Therefore, we
can assume $x_0 \neq 0$. Then, $x_0\in \ker(F_1)$ implies
\begin{equation}\label{x0}
B_m(a)=x_0^{q^m-1}=(x_0^s)^{(q-1)\cdot \frac{Q-1}{p^{d}-1}}.
\end{equation}
Since $B_m(a)\in\GF{p^{d}}$, therefore $(x_0^s)^{q-1}\in\GF{Q}$.

Now, we will show
\[
B_m(a)=B_m(a)^s.
\]
Since $P_a(X)$ has $p^d+1$ rational solutions when $A_m(a)=0$, there
exists such a non-zero $x_1$ that
\[
L_a(x_1)=0, x_1^{q-1}\in\GF{Q}.
\]
Then \eqref{F1La} gives $F_1(x_1)=0$ i.e.
\[
x_1^{q^m-1}=B_m(a),
\]
and on the other hand
\[
x_1^{q^m-1}=(N_{\GF{Q}|\GF{p^{d}}}(x_1^{q-1}))^s=(N_{\GF{q^{m}}|\GF{q}}(x_1^{q-1}))^s=(x_1^{q^m-1})^s=B_m(a)^s,
\]
where the second equality followed from the fact that
$N_{\GF{Q}|\GF{p^{d}}}(y)=N_{\GF{q^{m}}|\GF{q}}(y)$ for any $y\in
\GF{Q}$. Thus, $B_m(a)=B_m(a)^s$.

Hence, $(x_0^s)^{q^m-1}=(x_0^{q^m-1})^s=B_m(a)^s=B_m(a)$ i.e.
$F_1(x_0^s)=0.$\qed
\end{proof}

Now, take  any $x_0\in \ker(F_1)$. The definition \eqref{def_f1} and
Lemma~\ref{lem_12} shows
$$x_0^s\cdot\GF{Q}^*:=\{x_0^s\cdot\alpha \mid \alpha\in \GF{Q}^*\}\subset \ker(F_1)=G_2(\GF{p^N})$$ and
$$(x_0^s\cdot\GF{Q}^*)^{q-1}\subset \GF{Q}.$$
Subsequently, Lemma~\ref{lem_10} and Equality~\eqref{closedsol}
prove
\[
G_1(x_0^s\cdot\GF{Q}^*)^{q-1}\subset S_a.
\]
In order to avoid the trivial zero solution, we need
\[
G_1(x_0^s\cdot\GF{Q}^*)\neq\{0\}.
\]
In fact, this is the case. Really, if we assume
$G_1(x_0^s\cdot\GF{Q}^*)=\{0\}$, then $G_1(x_0^s\cdot
\GF{q^m})=\{0\}$ (because $G_1$ is $\GF{q}-$linear, and $\GF{q^m}$
is generated by $\GF{q}$ and $\GF{Q}$) which contradicts to
$\deg(G_1)< q^m$.

Next, in order to explicit all $p^d+1$ elements in $S_a$, we need to
deduce the following lemma.
\begin{lemma}
Let $A_m(a)=0$ and $x_0$ be a $\GF{Q}-$solution to $P_a(X)=0$. Then,
$\frac{x_0^2}{a}$ is a $(q-1)-$th power in $\GF{Q}$. For $\beta\in
\GF{Q}$ with $\beta^{q-1}=\frac{x_0^2}{a}$,
\begin{equation}\label{rw}
w^{q}-w+\frac{1}{\beta x_0}=0
\end{equation}
has exactly $p^d$ solutions in $\GF{Q}$. Let $w_0\in \GF{Q}$ be a
$\GF{Q}-$solution to Equation~\eqref{rw}. Then, the $p^d+1$
solutions in $\GF{Q}$ to $P_a(X)=0$ are $x_0,
(w_0+\alpha)^{q-1}\cdot x_0$ where $\alpha$ runs over $\GF{p^d}$.
\end{lemma}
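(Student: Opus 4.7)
The plan is to derive the key identity that any nonzero difference $z=x-x_0$ of two $\GF{Q}$-zeros of $P_a$ must satisfy, and then to linearise the problem via the double substitution $x=yx_0$, $y=v^{q-1}$. First I would obtain this identity: writing $x=x_0+z$ and expanding $(x_0+z)^{q+1}=x_0^{q+1}+x_0^qz+x_0z^q+z^{q+1}$ in characteristic $p$, the equations $P_a(x)=P_a(x_0)=0$ combine to give $z(x_0^q+x_0z^{q-1}+z^q+1)=0$; substituting $x_0^q=-1-a/x_0$ yields
\[
z^{q-1}(z+x_0)=a/x_0,\quad\text{i.e.,}\quad z^{q-1}x=a/x_0.
\]

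Next, to prove that $x_0^2/a$ is a $(q-1)$-th power in $\GF{Q}^*$, I would invoke that $|S_a|=p^d+1\ge 3$ by Lemma~\ref{Lem.p^d+1cond} and pick three distinct zeros $x_0,x_1,x_2$; applying the identity to each pair gives $x_ix_j/a=(x_i-x_j)^{1-q}$, and then
\[
\frac{x_0^2}{a}=\frac{(x_0x_1/a)(x_0x_2/a)}{x_1x_2/a}=\left(\frac{x_1-x_2}{(x_1-x_0)(x_2-x_0)}\right)^{q-1}
\]
exhibits $x_0^2/a$ as a $(q-1)$-th power, giving $\beta$. Setting $y=x/x_0\ne 1$ in the key identity reduces $P_a(x)=0$ to $y(y-1)^{q-1}=a/x_0^{q+1}=(1/(\beta x_0))^{q-1}$; since the right side is a $(q-1)$-th power and so is $(y-1)^{q-1}$, $y$ itself must be a $(q-1)$-th power in $\GF{Q}^*$, justifying the substitution $y=v^{q-1}$ with $v\in\GF{Q}^*$. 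Using $v^{q-1}-1=(v^q-v)/v$, this simplifies to $(v^q-v)^{q-1}=(1/(\beta x_0))^{q-1}$, so $(v^q-v)\beta x_0\in\mu_{q-1}\cap\GF{Q}^*=\GF{p^d}^*$, i.e., $v^q-v=c/(\beta x_0)$ for some $c\in\GF{p^d}^*$.

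From this, the remaining parts follow quickly. For the equation count: any second zero $x_1\in S_a\setminus\{x_0\}$ produces such a $v\in\GF{Q}^*$, so $\tr[d]{n}(v^q-v)=0$ together with $\GF{p^d}$-linearity of $\tr[d]{n}$ forces $\tr[d]{n}(1/(\beta x_0))=0$; hence $-1/(\beta x_0)$ lies in the image of $w\mapsto w^q-w$ on $\GF{Q}$ (whose kernel is $\GF{p^d}$), and $w^q-w+1/(\beta x_0)=0$ has exactly $p^d$ solutions $\{w_0+\alpha:\alpha\in\GF{p^d}\}$. For the explicit list of zeros: given any such $w_0$ and $\alpha\in\GF{p^d}$, since $\alpha^q=\alpha$ one has $(w_0+\alpha)^{q-1}-1=(w_0^q-w_0)/(w_0+\alpha)$, so with $y_\alpha:=(w_0+\alpha)^{q-1}$,
\[
y_\alpha(y_\alpha-1)^{q-1}=(w_0^q-w_0)^{q-1}=(1/(\beta x_0))^{q-1}=a/x_0^{q+1},
\]
confirming $x_\alpha:=y_\alpha x_0\in S_a$; the $p^d+1$ values $x_0$ and $x_\alpha$ are distinct because $w_0\notin\GF{p^d}$ (else $w_0^q-w_0=0\ne-1/(\beta x_0)$), which forces $(w_0+\alpha)^{q-1}$ to be distinct from $1$ and pairwise distinct as $\alpha$ varies over $\GF{p^d}$. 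I expect the first claim to be the main obstacle: isolating $x_0^2/a$ as a $(q-1)$-th power genuinely uses $|S_a|\ge 3$ via the three-solutions combination, and once this is in hand the substitution $y=v^{q-1}$ reduces the whole problem to the standard linear equation $v^q-v=(\text{const})$ in $\GF{Q}$.
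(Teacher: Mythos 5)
Your proposal is correct, and it reaches the same endpoint as the paper — the Artin--Schreier equation $w^q-w=-1/(\beta x_0)$ and the parametrization $x_0,\ (w_0+\alpha)^{q-1}x_0$ — but the two nontrivial existence claims are handled by genuinely different arguments. The paper substitutes $x\mapsto x_0-x$, inverts, twists by an arbitrary root $t_0$ of $t^{q-1}=x_0^2/a$ taken in the algebraic closure, and then proves $t_0\in\GF{q}\cdot\GF{Q}$ by a pigeonhole step: among the $p^d$ translates $w_0+\lambda$, $\lambda\in\GF{q}$, giving $\GF{Q}$-rational values $(w_0+\lambda)t_0$, two of them force $\lambda t_0\in\GF{Q}$ for some $\lambda\neq0$; the count of solutions of \eqref{rw} is then read off from the $p^d$ remaining zeros. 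You instead derive the pairwise identity $(x_i-x_j)^{q-1}x_ix_j=a$ and combine three of the $p^d+1\ge 3$ zeros guaranteed by Lemma~\ref{Lem.p^d+1cond} to produce the explicit $(q-1)$-th root $\beta=\frac{x_1-x_2}{(x_1-x_0)(x_2-x_0)}\in\GF{Q}$ of $x_0^2/a$, and you count the solutions of \eqref{rw} via the relative trace: $\tr[d]{n}\bigl(1/(\beta x_0)\bigr)=0$ follows from one auxiliary zero through $v^q-v=c/(\beta x_0)$ with $c\in\GF{p^d}^*$, and the image of $w\mapsto w^q-w$ on $\GF{Q}$ equals $\ker \tr[d]{n}$ by a cardinality count. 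All the individual steps check out (including $(-1)^{q-1}=1$ in every characteristic, the converse direction of the identity needed to confirm that each $(w_0+\alpha)^{q-1}x_0$ really is a zero, and the pairwise distinctness from $w_0\notin\GF{q}$). What your route buys is explicitness — a closed formula for $\beta$ in terms of known zeros and a clean trace criterion in place of the paper's somewhat indirect rationality argument for $t_0$; the cost is that you invoke the count $N_a=p^d+1$ twice (once for three zeros, once for the auxiliary zero), though the paper's proof leans on that same count, so neither version is more self-contained.
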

\begin{proof} We substitute $x$ in $P_a(x)$ with $x_0-x$ to get
$$(x_0-x)^{q+1}+(x_0-x)+a=0$$ or
$$x^{q+1}-x_0x^q-x_0^qx-x+x_0^{q+1}+x_0+a=0$$
which implies
$$x^{q+1}-x_0x^q-(x_0^q+1)x=0,$$
or equivalently,
$$x^{q+1}-x_0x^q+\frac{a}{x_0}x=0.$$

 Since $x=0$ corresponds to $x_0$ being a zero of
$P_a(X)$, we can the latter equation by $x^{q+1}$ to get
\begin{equation}\label{y}
\frac{a}{x_0}y^q-x_0y+1=0
\end{equation} where $y=\frac{1}{x}$. Now, let
$y=tw$ where
\begin{equation}\label{t}
t^{q-1}=\frac{x_0^2}{a}.
\end{equation}
 Then,
Equation~\eqref{y} is equivalent to
\begin{equation}\label{w}
w^{q}-w+\frac{1}{tx_0}=0.
\end{equation}
If $t_0$ is a solution to Equation~\eqref{t}, then the set of all
$q-1$ solutions can be represented as $t_0\cdot \GF{q}^*$.  For
every $\lambda\in \GF{q}^*$, when $w_0$ is a solution to
Equation~\eqref{w} for $t=t_0$, $\lambda w_0$ is a solution to
Equation~\eqref{w} for $t=t_0/\lambda$. By the way, $(t_0,w_0)$ and
$(t_0/\lambda, \lambda w_0)$ give the same $y_0=t_0\cdot
w_0=t_0/\lambda \cdot \lambda w_0$. Therefore, to find all
$\GF{Q}-$solutions to Equation~\eqref{y} one can consider
Equation~\eqref{w} for any fixed solution $t_0$ of
Equation~\eqref{t}.

Now, we will show that any solution $t_0$ to Equation~\eqref{t} lies
in $\GF{q}\cdot \GF{Q}:=\{\alpha\cdot \beta\mid \alpha\in \GF{q},
\beta\in \GF{Q}\}$. In fact, we know that Equation~\eqref{w} has
$p^d$ solutions $w$ with $y=wt_0\in \GF{Q}$. Let's fix a solution
$w_0$ with $y_0=w_0t_0\in \GF{Q}$ of Equation~\eqref{w}. Then, the
set of all solutions to Equation~\eqref{w} can be written as
$w_0+\GF{q}$. Therefore, it follows that there exist $p^d\geq 2$
elements $\lambda\in \GF{q}$ with $(w_0+\lambda)t_0\in \GF{Q}$. As
$w_0t_0\in \GF{Q}$ and $(w_0+\lambda)t_0\in \GF{Q}$, we have
$\lambda t_0\in \GF{Q}$ i.e. $t_0\in \frac{1}{\lambda}\GF{Q}\subset
\GF{q}\cdot \GF{Q}$.

Hence, we can write $t_0=\alpha\cdot \beta$, where $\alpha\in
\GF{q}, \beta\in \GF{Q}$, and it follows that the set of all
solutions to Equation~\eqref{t} are $\GF{q}^*\cdot \beta$. This
means that Equation~\eqref{t} has $p^d-1$ solutions (i.e.
$\GF{p^d}^*\cdot \beta$) in $\GF{Q}$, i.e., $\frac{x_0^2}{a}$ is a
$(q-1)-$th power in $\GF{Q}$. Moreover, Equation~\eqref{rw} has
exactly $p^d$ solutions in $\GF{Q}$ (because Equation~\eqref{y} has
exactly $p^d$ solutions $y=w\beta$ in $\GF{Q}$). When $w_0\in
\GF{Q}$ is such a solution, the set of all $p^d$ solutions in
$\GF{Q}$ is $w_0+\GF{p^d}$. Since Equation~\eqref{w} yields
$y=wt=\frac{1}{(1-w^{q-1})x_0}$, we have
$x_0-x=x_0-\frac{1}{y}=x_0-(1-w^{q-1})x_0=w^{q-1}x_0$. The proof is
over. \qed
\end{proof}

Finally, all discussion of this section are summed up in the
following theorem.
\begin{theorem} Assume $A_m(a)=0$.
Let $N=m(p^d-1)$, $s=\frac{(q^m-1)\cdot(p^{d}-1)}{(Q-1)\cdot
(q-1)}$, $G_1(X)=\sum_{i=0}^{m-2}A_{m-1-i}(a)^{q^{i+1}}\cdot
X^{q^{i}}$ and $G_2(X)=\sum_{i=0}^{p^d-2}B_{m}(a)^{p^d-2-i}\cdot
X^{q^{mi}}$. It holds $
G_1(G_2(\GF{p^N}^*)^s\cdot\GF{q}^*\cdot\GF{Q}^*)^{q-1}\neq \{0\}$.
Take a $x_0\in
G_1(G_2(\GF{p^N}^*)^s\cdot\GF{q}^*\cdot\GF{Q}^*)^{q-1}\setminus
\{0\}$. $\frac{x_0^2}{a}$ is a $(q-1)-$th power in $\GF{Q}$. For
$\beta\in \GF{Q}$ with $\beta^{q-1}=\frac{x_0^2}{a}$,
\begin{equation}
w^{q}-w+\frac{1}{\beta x_0}=0
\end{equation}
has exactly $p^d$ solutions in $\GF{Q}$. Let $w_0\in \GF{Q}$ be a
$\GF{Q}-$solution to Equation~\eqref{rw}. Then, the $p^d+1$
solutions in $\GF{Q}$ of $P_a(X)$ are $x_0, (w_0+\alpha)^{q-1}\cdot
x_0$ where $\alpha$ runs over $\GF{p^d}$.
\end{theorem}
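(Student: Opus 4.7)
The theorem is essentially a compilation of the preceding constructions, so the plan is to assemble the pipeline in order and verify that the orbit-size bookkeeping matches $p^d+1$. The skeleton is: inputs in $\ker(F_1)=G_2(\GF{p^N})$ (by \eqref{eq_18}), pushed to $\ker(L_a)$ via $G_1$ (by Proposition~\ref{lem_8}, since $F_1=L_a\circ G_1$), then converted into $\GF{Q}$-zeros of $P_a(X)$ using the factorization $L_a(X)=XP_a(X^{q-1})$ together with \eqref{closedsol}.

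First I would verify that every element produced by the explicit formula $G_1\bigl(G_2(\GF{p^N}^*)^s\cdot\GF{q}^*\cdot\GF{Q}^*\bigr)^{q-1}$ lies in $S_a\cup\{0\}$. Starting from $x_0\in G_2(\GF{p^N}^*)\subset\ker(F_1)$, Lemma~\ref{lem_12} gives $x_0^s\in\ker(F_1)$ with $(x_0^s)^{q-1}\in\GF{Q}$. Closure of $\ker(F_1)$ under multiplication by $\GF{Q}^*$ follows from $F_1\in\GF{p^d}[X]$ being $\GF{q}$-linearized and from $\GF{Q}\subset\GF{q^m}$, which forces $F_1(\alpha y)=\alpha F_1(y)$ for $\alpha\in\GF{Q}$; closure of the $(q-1)$-power in $\GF{Q}$ is immediate since $(\alpha y)^{q-1}=\alpha^{q-1}y^{q-1}$. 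Applying $G_1$ and invoking Proposition~\ref{lem_8} places the image in $\ker(L_a)$; Lemma~\ref{lem_10} guarantees the $(q-1)$-power of the image still lies in $\GF{Q}$; and \eqref{closedsol} finally converts each nonzero such $G_1(y)$ to an element of $S_a$ via the $(q-1)$-th power.

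The nonvanishing assertion I would settle by the degree/linearity argument already sketched in the text: if $G_1$ vanished on $y_0\cdot\GF{Q}^*$ for some nonzero $y_0=x_0^s$, then the $\GF{q}$-linearity of $G_1$ combined with the fact that $\GF{q}$ and $\GF{Q}$ together generate $\GF{q^m}$ as a $\GF{q}$-algebra would force $G_1$ to vanish identically on $y_0\cdot\GF{q^m}$, contradicting $\deg(G_1)<q^m$. This secures at least one nonzero $\GF{Q}$-zero $x_0$ of $P_a(X)$ through the explicit formula.

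Having produced one $\GF{Q}$-zero $x_0$, I would then invoke the preceding lemma verbatim: the substitution $x\mapsto x_0-x$ in $P_a$ followed by the reciprocal $y=1/x$ and the rescaling $y=\beta w$ with $\beta^{q-1}=x_0^2/a$ linearizes the problem into $w^q-w+\tfrac{1}{\beta x_0}=0$, which has exactly $p^d$ solutions $w_0+\alpha$ ($\alpha\in\GF{p^d}$) in $\GF{Q}$, yielding the $p^d$ further zeros $(w_0+\alpha)^{q-1}x_0$. Together with $x_0$ itself, these are the required $p^d+1$ zeros. The main obstacle is the bookkeeping of Frobenius/linearity constraints in the chain $G_2\to(\cdot)^s\to\text{scale by }\GF{q}^*\cdot\GF{Q}^*\to G_1\to(\cdot)^{q-1}$; once the nonvanishing step is verified, the last lemma does the counting for free and no further genuinely new argument is needed.
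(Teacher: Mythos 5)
Your proposal is correct and follows essentially the same route as the paper, which itself presents this theorem as a summary of the section: the chain $\ker(F_1)=G_2(\GF{q^N})$, $F_1=L_a\circ G_1$, Lemma~\ref{lem_12} and Lemma~\ref{lem_10} to keep the $(q-1)$-th powers in $\GF{Q}$, the degree argument for nonvanishing of $G_1$ on $x_0^s\cdot\GF{Q}^*$, and the final lemma to pass from one rational zero to all $p^d+1$. No genuinely different ideas are introduced, and the bookkeeping you flag (closure of $\ker(F_1)$ under scaling by $\GF{q}^*\cdot\GF{Q}^*$) is handled the same way the paper implicitly does.
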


Note that one can also explicit $w_0$ by an immediate corollary of
Theorem 4 and Theorem 5 in \cite{MKCL2019}.

\section{Conclusion}
In
\cite{Bluher2004,HK2008,HK2010,BTT2014,Bluher2016,KM2019,CMPZ2019,MS2019,KCM19},
 partial results about the zeros of $P_a(X)=X^{p^k+1}+X+a$ over $\GF{p^n}$ have been obtained. In this paper, we provided explicit
expressions for all possible zeros in $\GF{p^n}$ of $P_a(X)$ in
terms of $a$ and thus finalize the study initiated in these papers.

\section*{Acknowledgement}
The authors deeply thank Professor Dok Nam Lee for his
many helpful suggestions and careful checking.

\end{document}